  \theoremstyle{plain}
 \theoremstyle{definition}
  \newtheorem{example}{\protect\examplename}
  \theoremstyle{plain}
  \newtheorem{prop}{\protect\propositionname}
  \theoremstyle{plain}
  \theoremstyle{plain}
  \newtheorem{assumption}{\protect\assumptionname}
  \theoremstyle{plain}
  \newtheorem{lem}{\protect\lemmaname}
\theoremstyle{plain}
\newtheorem{thm}{\protect\theoremname}
\theoremstyle{definition}
  \newtheorem{defn}{\protect\definitionname}
  \theoremstyle{remark}
  \theoremstyle{definition}
  \providecommand{\assumptionname}{Assumption}
  \providecommand{\claimname}{Claim}
  \providecommand{\conjecturename}{Conjecture}
  \providecommand{\definitionname}{Definition}
  \providecommand{\examplename}{Example}
  \providecommand{\lemmaname}{Lemma}
  \providecommand{\propositionname}{Proposition}
\providecommand{\corollaryname}{Corollary}
\providecommand{\theoremname}{Theorem}
\begin{document}

\title{Asymptotic value of monitoring structures\\in stochastic games\thanks{We are grateful to the co-editor and three anonymous referees for their invaluable suggestions. We wish to thank Satoru Takahashi for his helpful comments, and also thank seminar participants at the 6th World Congress of the Game Theory Society, the 2020 Econometric Society World Congress, and the HKBU-NTU-Osaka-Kyoto Theory Seminar for their feedback. All remaining errors are our own.}
\author{Daehyun Kim\thanks{Department of Economics, UCLA. E-mail: \href{mailto:dkim85@outlook.com}{\nolinkurl{dkim85@outlook.com}}}\and Ichiro Obara\thanks{Department of Economics, UCLA. E-mail: \href{mailto:iobara@econ.ucla.edu}{\nolinkurl{iobara@econ.ucla.edu}}}}}

\date{October 20, 2025}

\maketitle

\singlespacing

\begin{abstract}
This paper studies how improved monitoring affects the limit equilibrium payoff set for stochastic games with imperfect public monitoring. We introduce a simple generalization of Blackwell garbling called \emph{weighted garbling} in order to compare different monitoring structures for this class of games. Our main result is the monotonicity of the limit perfect public equilibrium (PPE) payoff set with respect to this information order. We show that the limit PPE payoff set expands when the monitoring structure gets more informative with respect to the weighted garbling order. We also show that a similar monotonicity holds for strongly symmetric equilibrium for symmetric stochastic games. Finally, we show that our weighted garbling order is useful to compare the limit PPE payoff set for different state transition laws and monitoring structures when the limit feasible payoff set is the same.
\end{abstract}
\strut

\textbf{Keywords:} Comparison of experiments, garbling, imperfect monitoring, monitoring structure, perfect public equilibrium, stochastic games 

\strut

\textbf{JEL Classification Numbers:} C72, C73 

\pagebreak

\pagebreak

\onehalfspacing

\section{Introduction}
\label{sec:1}

This paper examines the informativeness of monitoring structures in stochastic games with imperfect public monitoring, where the players' actions are monitored imperfectly through noisy public signals. Such models have broad economic relevance. For example, many standard repeated moral hazard scenarios (e.g., tacit collusion) can be framed in such stochastic games when the key underlying state, like a demand shock or an interest rate, evolves over time. In these settings, the quality of monitoring structure plays a critical role; it shapes agents' incentives and, consequently, determines the range of possible outcomes that could arise in equilibrium.  

We study how a concept of garbling can be used to compare the \emph{limit} equilibrium payoffs across different monitoring structures. Specifically, we introduce a novel form of garbling, termed \emph{weighted garbling}, and demonstrate that it captures the informativeness of monitoring structures when players are very patient.

Blackwell ordering is one of the most important information orders in Economics and Statistics. It is equivalent to the existence of garbling (noise) that transforms a more informative information structure into a less informative one \cite[]{Blackwell_1951, Blackwell_1953_AMS}. Weighted garbling represents a more general class of such transformations, encompassing the standard Blackwell garbling as a special case.

In repeated games with imperfect public monitoring, it is well established that the set of perfect public equilibrium (PPE) payoffs \cite[]{FLM_1994_ECMA} expands for any fixed discount factor as the monitoring structure becomes more informative under the Blackwell order \cite[]{Kandori_1992_RES}. However, this monotonicity does not extend to stochastic games \cite[]{Kim_2019_IJGT}.

Our first main result establishes that the set of PPE payoffs that can be sustained from any initial state expands weakly in the limit---as discounting vanishes---when the monitoring structure becomes more informative under the weighted garbling order, assuming a certain kind of full-dimensionality of payoffs. In particular, if the Markov chain over states induced by any pure Markov strategy profile is irreducible, then a state-independent limit PPE payoff set exists and it expands with increased informativeness in the weighted garbling order.

Since the standard garbling, which we refer to as \emph{joint garbling} in the context of stochastic games, is a special case of weighted garbling, a corollary of our result is that the limit PPE payoff set expands when the monitoring structure becomes more Blackwell informative. This implies that the standard Blackwell ordering remains a meaningful measure of informativeness for stochastic games in the limit, even if it fails to guarantee expansion for fixed discount factors. Furthermore, by focusing on the limit, our result can establish an expansion of the equilibrium payoff set across a broader class of monitoring structure pairs under weighted garbling than under Blackwell (or joint) garbling.

Notably, our result yields a novel insight even for repeated games. It demonstrates that in the limit case, the equilibrium payoff set expands as the monitoring structure becomes more informative under the weighted garbling order, even if it does not become more informative in the sense of Blackwell.

To facilitate the discussion of weighted garbling, let us provide its precise definition. \emph{Monitoring structure} $(Y, f)$ consists of a finite set $Y$ of public signals and a conditional distribution $f( \cdot |t, s, a) \in \Delta(Y)$, which depends on the current state $s$, the next state $t$, and the current action profile $a$. Combined with the state transition law $q( \cdot |s,a) \in \Delta(S)$, which is a primitive of the model that remains fixed throughout most of the paper (except for \Cref{sec:6}), this monitoring structure generates a joint distribution $p(\cdot, \cdot|s,a) \in \Delta(S \times Y)$ over the next state-signal pair $(t, y)$ given $(s,a)$. Since both the next state and public signal convey information about the hidden action $a$, it is natural to evaluate the informativeness of the joint distribution with respect to actions at each state $s \in S$. A monitoring structure $(Y, f)$ is said to be a weighted garbling of $(Y', f')$ if, for each state $s$, there exist a nonnegative \emph{weight} $\gamma_{s}^{t', y'} \geq 0$ and a \emph{garbling} $\phi_s(\cdot, \cdot|t', y') \in \Delta(S \times Y)$ for each $(t',y')$ such that $p(t,y|s,a) = \sum_{(t', y') \in S \times Y'}\gamma_{s}^{t', y'} \phi_s(t, y| t', y') p'(t',y'|s,a)$ for every next state-signal pair $(t,y)$ and for every action profile $a$. As a special case, if the weights $\gamma_{s}^{t', y'}$ are uniformly equal to $1$, the condition simplifies to: $p(t,y|s,a) = \sum_{(t', y') \in S \times Y'} \phi_s(t, y| t', y') p'(t',y'|s,a)$, which corresponds to the standard Blackwell garbling, referred to as joint garbling.

The simplest example of weighted garbling can be found in repeated games with imperfect public monitoring. Consider a monitoring structure that produces some public signal $\tilde y$. Now, imagine an alternative monitoring structure where no signal is observed with probability $1-\epsilon$ and a strictly more Blackwell informative public signal $\hat y$ is observed with probability $\epsilon \in (0,1)$. Provided that $\epsilon$ is sufficiently small, neither monitoring structure is more informative than the other in the sense of Blackwell. However, conditional on observing $\hat y$, the second monitoring structure is clearly more informative. In this case, the first monitoring structure is a weighted garbling of the second, but not vice versa. In this simple example, it is intuitively clear why the limit PPE payoff set is larger under the second monitoring structure: if players are very patient, they can disregard the no-signal events and wait for the more informative signal to permit more efficient punishment. Our main result formally establishes such monotonicity of limit PPE payoffs with respect to the weighted garbling order for stochastic games with imperfect public monitoring.

Our monotonicity result builds on two key ideas for using information more efficiently when players are very patient. The first idea is to treat both the next-period state and the public signal jointly as a monitoring device. Since the next state conveys information about current actions, it is natural to evaluate the informativeness of the joint distribution over the next period state and public signal. However, the next state also determines the physical state of the game, so it cannot be treated purely as a signal. For example, even if the next state reveals a defection in the current period, it may not allow for a sufficiently severe punishment to deter such behavior.
To address such state-dependent constraints, we focus on the set of payoffs that can be sustained at every state. It is known that each such payoff profile can be supported by constructing a state-independent ``block self-generating set'' around it (with respect to joint distribution) when $\delta$ is sufficiently high \cite[henceforth HSTV]{HSTV_2011_ECMA}. When the joint distribution becomes more informative in the sense of standard garbling (i.e., joint garbling), one can mimic this construction by treating the next state as a purely informative signal and modifying the continuation payoffs in the original construction. Although these new continuation payoffs differ from the original ones in a complex way, when $\delta$ is sufficiently high, they behave similarly and stay in the same state-independent self-generating set. This explains why the set of PPE payoffs that can be sustained at every state expands weakly in the limit as $\delta \rightarrow 1$ when the monitoring structure becomes more informative under the joint garbling. Furthermore, if some irreducibility condition is satisfied, then this limit set of state-independent PPE payoffs coincides with the limit PPE payoff set at every state. The second idea centers on conditional informativeness, as illustrated in the repeated game example above. When players receive a strictly more informative signal but with a small probability, it may be impossible to sustain the same level of cooperation at a fixed discount factor. However, with greater patience, the punishment becomes effective enough to support cooperation. Thus, more patient players can exploit such conditionally more informative signals to enhance efficiency.

Our second main result focuses on strongly symmetric equilibrium (SSE) for symmetric stochastic games. SSE is an important class of PPE in symmetric repeated or stochastic games, where every player chooses the same action after every public history. As a natural and tractable refinement of PPE, SSE is often the focus of applied theorists and economists. 
We establish that a version of weighted garbling, adapted specifically for symmetric environments, yields monotonicity in the limit SSE payoff set. This result is particularly valuable because no folk theorem applies in SSE. Recall that the first-best outcome is usually impossible to sustain in SSE due to the necessity of costly group punishments.\footnote{Such inefficiency arises even for general PPE when individual deviations cannot be distinguished in symmetric settings \cite[]{RMM_1986_ECMA}.} 
Our result demonstrates that a more informative monitoring structure under the weighted garbling order can generate a larger limit SSE payoff set by mitigating such incentive cost.

We apply our result for SSE to stochastic partnership games, where the team outcome depends solely on the total effort of its members. We show that when the agents are very patient, the best sustainable team outcome under SSE strictly improves as the monitoring structure becomes more informative under a strict version of the weighted garbling order.

Throughout this paper, we focus on the case of arbitrarily patient players rather than the case of fixed discount factors, for the same reason as we focus on the limit PPE payoff set in folk theorems. In many settings, high levels of patience allow us to abstract from temporary forces and transient dynamics, hence obtain sharper results regarding the comparison of various monitoring structures. As such, our findings offer valuable guidance for those who use stochastic games in the settings that involve very patient agents.

\vspace{5mm}

\textbf{Related Literature}

\vspace{3mm}

\cite{Kandori_1992_RES} and \cite{Kim_2019_IJGT} are the most closely related to our work. Both examine how monitoring structures affect the PPE payoff set under a fixed discount factor. \cite{Kandori_1992_RES} analyzes repeated games with imperfect public monitoring and shows that the PPE payoff set expands as the monitoring structure becomes more informative in the sense of Blackwell, leveraging the recursive characterization developed by \cite{APS_1990_ECMA}.\footnote{The original result of \cite{Kandori_1992_RES} assumes continuous signals and uses pure-strategy sequential equilibrium as the solution concept. It is well known that, in such an environment, any pure-strategy sequential equilibrium has a payoff-equivalent pure-strategy PPE. In contrast, we assume finite signals and allow for mixed-strategy PPE.} In the context of stochastic games, \cite{Kim_2019_IJGT} provides an extension of Kandori's condition that guarantees the expansion of the PPE payoff set for a fixed discount factor. This condition relies on the notion of \emph{ex-post garbling}, which is stronger than joint garbling and requires that the public signal be more informative for every state transition pair $(s,t)$. Importantly, \cite{Kim_2019_IJGT} also presents a counterexample showing that a more informative monitoring structure under the standard (joint) garbling order does \emph{not} guarantee expansion of the equilibrium payoff set for a \emph{fixed} discount factor. In contrast, our main result shows that a more informative monitoring structure under joint garbling \emph{does} expand the \emph{limit} equilibrium payoff set, since joint garbling is a special case of weighted garbling.

This study also connects to the folk theorem literature for repeated and stochastic games, as it focuses on the limit PPE payoff set. \cite{FY_2011_JET} and HSTV establish a variety of folk theorems for stochastic games with imperfect public monitoring.\footnote{An earlier contribution by \cite{Dutta_1995_JET} offers a folk theorem and foundational results for stochastic games with perfect monitoring.} We build on the characterization of limit PPE payoff set by HSTV, which extends the approach of \cite{FL_1994_JET} for repeated games to stochastic games. 

Our main results are particularly valuable in environments where the folk theorem does not apply. For example, in repeated games, the folk theorem holds when the monitoring structure is sufficiently rich and satisfies identifiability conditions such as the pairwise full rank condition \cite[]{FLM_1994_ECMA}. In such cases, all feasible and individually rational payoffs can be sustained without any incentive cost, making it impossible to expand the limit PPE payoff set even with strictly more informative monitoring structures.

Several studies highlight that improved information can sometimes be detrimental in specific classes of stochastic games. \cite{Kloosterman_2015_JET} examines a setting where state transitions are independent of players' actions, actions are observable, and players observe a public signal about the next state before choosing their actions. He shows that, for a given discount factor, the PPE payoff set may shrink as the public signal becomes more informative about the next state. This model can be interpreted as a special case of our model with observable actions, by treating the pair of current state and public signal as the effective current state. With observable actions, our weighted garbling order does not rank any two information structures, as it is designed to measure the informativeness of monitoring structures with respect to hidden actions. \cite{Sugaya_Wolitzky_2018_JPE} study a dynamic price-setting oligopoly with \emph{private} monitoring, where the demand state evolves exogenously. They show that increased transparency can undermine collusion, meaning that more informative private signals may reduce cooperation. In contrast, our results demonstrate that a more informative \emph{public} signal always expands the limit equilibrium payoff set, reconfirming the role of informative public information in sustaining cooperation among patient players.

A related strand of literature explores the trade-off between the quality of monitoring and the patience of players \cite[]{AMP_1991_ECMA, FL_2007_RED, FL_2009_QJE, Sannikov_Skrzypacz_2007_AER, Sannikov_Skrzypacz_2010_ECMA, Sugaya_Wolitzky_2022_WP}. For example, such a trade-off naturally arises in repeated games with frequent actions and a continuous flow of information. As the period length shortens, players become more patient as they place greater weights on future payoffs, while the quality of monitoring may simultaneously decline. In contrast, our approach focuses directly on the limit case as the discount factor approaches $1$, a standard approach in folk theorem analysis, to compare monitoring structures.

The remainder of the paper is organized as follows. \Cref{sec:2} introduces our model of discounted stochastic games with imperfect public monitoring. \Cref{sec:3} defines weighted garbling and related notions, illustrates them with examples, and provides a sufficient condition for weighted garbling in a special class of stochastic games. In \Cref{sec:4}, we present our main results for general stochastic games, with the central finding being the monotonicity of the limit PPE payoff set with respect to the weighted garbling order. In \Cref{sec:5}, we establish similar results for the limit SSE payoff set, using a version of weighted garbling tailored to symmetric environments, and apply them to stochastic partnership games to demonstrate the relevance of the concept in applications. \Cref{sec:6} explores how our results can be used to compare two monitoring structures with different transition laws that yield the same limit feasible payoff set. Finally, \Cref{sec:7} concludes.

\section{Model}
\label{sec:2}

We study discounted stochastic games with imperfect public monitoring. Let $S$ be the finite set of states, and let $I\equiv \{1,2,\dots, N\}$ be the finite set of players. For each player $i \in I$, let $A_i$ be the finite set of actions available to player $i$, and let $A \equiv \prod_{i \in I} A_i$. Player $i$'s payoff function is given by $u_i: A \times S \to \mathbb{R}$. The tuple $G \equiv(I, (A_i)_i, (u_i)_i)$ defines the stage game at each state.

Each period $k \in \mathbb{N}$ starts with a state $s \in S$. Each player $i$ takes an action $a_i \in A_i$, then the new state $t \in S$ for the next period is drawn randomly according to the transition law $q (\cdot |s,a) \in \Delta (S)$, where $a = (a_i)_i$. When the transition law does not depend on the action, we say that the transition is \emph{action-independent} and use the notation $q(\cdot|s)$.
Players share a common discount factor $\delta \in \left[0,1\right)$ to discount future payoffs. The triple $(G,q, \delta)$ defines the underlying dynamic strategic environment, except for the monitoring structure, which we introduce next.

States are perfectly observed by all the players at the beginning of each period, but actions are not directly observable. Instead, players observe a public signal about actions. Thus, we augment $(G,q, \delta)$ with a \emph{monitoring structure} $\Pi=(Y,f)$,  which defines the monitoring environment of the game. In each period, players observe a public signal $y \in Y$ from a finite set $Y$. The distribution of public signal depends on the next state $t \in S$, the current state $s \in S$, the current action profile $a \in A$, and is denoted by $f ( \cdot |t, s, a) \in \Delta (Y)$. We can allow perfect monitoring. For example, $\Pi$ such that $Y = A$ and $f(a|t, s ,a) =1$ for any $s,t \in S$ and $a \in A$ is a perfect monitoring structure. The pair $\Pi$ and $q$ induces a joint distribution $p (\cdot, \cdot |s,a) \in \Delta (S \times Y)$ conditional on any $(s,a) \in S \times A$, defined by $p(t,y|s,a) := f(y|t,s,a)q(t|s,a)$ for any $(t,y) \in S \times Y$. 

We assume that players observe only the public signal, the state, and their own action (privately) in each period. This does not necessarily mean that realized payoffs are unobservable. In many IO applications of repeated games, there is a realized payoff $g_i(a_i ,y)$, which is observable to player $i$ and depends only on player $i$'s action and the public signal $y$. Then the expected payoff $u_i(a)$ becomes a function of the action profile. In this case, player $i$'s realized payoff does not provide any additional information about $a_{-i}$ beyond $y$. Similarly, we may introduce a realized payoff $g_i(a_i, y, t, s)$ and derive the above $u_i$ as $u_i (a,s) = \sum_{ (t,y) \in S \times Y} g_i (a_i,y,t,s)p (t,y|s,a)$ in certain contexts.\footnote{Note that, if realized payoff $g_i$ does not depend on $y$ (i.e., $g_i (a_i,s,t)$), we can change the monitoring structure without affecting $u_i$. This contrasts with repeated games, where monitoring structure and ex-ante expected payoff become dependent under this type of interpretation. In \cite{Kandori_1992_RES}, realized payoffs need to be adjusted to maintain the same expected payoffs across different monitoring structures. Such an adjustment is not necessarily needed in our setting due to the additional flexibility of state-dependent payoffs.}

A \emph{discounted stochastic game with imperfect public monitoring} is defined as a quadruple $\Gamma = (G, q, \delta, \Pi)$. Henceforth, we fix $(G,q)$ and vary $\left(\delta, \Pi\right)$ to address our research questions, except in \Cref{sec:6}.

A private history of player $i$ at period $k$ is a sequence of realized states, chosen actions, and realized public signals up to the beginning of period $k$, i.e., $h_i^k = (s^1, a_i^1, y^1,\dots, s^{k-1},a_i^{k-1},y^{k-1},s^k)$. A public history at period $k$ is $h^k =(s^1, y^1,\dots, s^{k-1},y^{k-1},s^k)$. Let $H_i^k$ be the set of all private histories of player $i$ at period $k$ and $H^k$ be the set of all public histories at period $k$. Furthermore, let $H_i \equiv \bigcup_{k=1}^\infty H_i^k$ and $H \equiv \bigcup_{k=1}^\infty H^k$.

A (behavioral) strategy $\sigma_i$ for player $i$ is defined as a mapping from $H_i$ to $\Delta (A_i)$. Let $\Sigma_i$ denote the set of strategies for player $i$. A strategy is \emph{public} if it depends only on public histories. Given a strategy profile $(\sigma_i, \sigma_{-i})$, player $i$'s average discounted payoff is 
$$U_i (\sigma_i, \sigma_{-i}; s) : = \left(1-\delta\right) \mathbb{E}^{(\sigma_i, \sigma_{-i})} \left[ \sum_{ k =1}^\infty \delta^{k-1} u_i (a^k, s^k) | s^1 = s \right],$$
where the expectation is evaluated using the probability measure $\mathbf{P}$ over $(S \times A)^\infty$ induced by $(\sigma_i)_{i \in I}$ and initial state $s$.

A profile of public strategies $(\sigma_i)_{i \in I}$ is a \emph{perfect public equilibrium (PPE)} \cite[]{FLM_1994_ECMA} if at every public history, the continuation public strategy profile is a Nash equilibrium for the continuation game. That is, for each $i$ and $h \in H$,
$$U_i (\sigma_i |_{h} , \sigma_{-i} |_h ; s(h) ) \geq U_i ( \sigma_i', \sigma_{-i} |_{h} ; s(h)), \quad \forall \sigma_i' \in \Sigma_i$$
where $\sigma_i |_{h}$ is the continuation strategy of $\sigma_i$ at public history $h \in H$ and $s(h)$ is the most recent state at  $h$.\footnote{In general, PPE is a strict subset of sequential equilibrium for games with imperfect public monitoring. For examples of sequential equilibria where players use non-public strategies (i.e., private strategies), see \cite{Kandori_Obara_2006_ECMA} and \cite{Mailath_Matthews_Sekiguchi_2002_BE}.}

\section{Comparison of Monitoring Structures}
\label{sec:3}
\subsection{Weighted Garbling}
\label{subsec:3.1}

In this section, we introduce \emph{weighted garbling}, a new notion of garbling that is the focus of this paper. We begin by discussing a few special cases of weighted garbling introduced in \cite{Kim_2019_IJGT} to clarify the basic ideas behind it.

We first observe that $\Pi' = (Y',f')$ is clearly more informative about actions than $\Pi = (Y,f)$ if, for each $s,t \in S$, there exists $\phi_{s,t}:Y' \rightarrow \Delta(Y)$ such that 
$$ f(y|t,s,a) = \sum_{y' \in Y'}\phi_{s,t}(y|y') f'(y'|t,s,a),\quad \forall y \in Y$$ for every $a \in A$. This means that the public signal for $\Pi$ is a garbling of the public signal for $\Pi^\prime$ conditional on every state transition $(s,t) \in S \times S$. \cite{Kim_2019_IJGT} introduces this notion of garbling, called \emph{ex-post garbling}, and shows that the set of PPE payoffs (weakly) expands for any fixed discount factor when a monitoring structure improves in this sense. This is an extension of the well-known result for repeated games \cite[]{Kandori_1992_RES} to stochastic games, as ex-post garbling reduces to the standard garbling for repeated games.

Ex-post garbling is a very strong condition in stochastic games. Instead, we may prefer to use the standard garbling without conditioning each state transition to rank monitoring structures. \emph{joint garbling} \cite[]{Kim_2019_IJGT} is a weaker notion of garbling and only requires that the pair of next period state $t$ and public signal $y$ for $\Pi$ is a garbling of $t$ and $y^\prime$ for $\Pi^\prime$. Formally, $\Pi$ is a \emph{joint garbling} of $\Pi^\prime$ if, for each $s \in S$, there exists $\phi_s : S \times Y' \to \Delta (S \times Y)$ such that
\[
p(t,y|s,a)= \sum_{(t^\prime,y^\prime) \in S \times Y^\prime} \phi_s(t,y|t^\prime,y^\prime)p'(t^\prime,y^\prime|s,a), \quad \forall (t,y)\in S \times Y
\] 
for every $a \in A$.\footnote{This does not mean that states need to be literally reshuffled to obtain $p$ from $p'$. It just means that $(t, y)$ can be interpreted as $(t^\prime, y^\prime)$ plus noise (like the standard garbling). See \autoref{exmp:WG}.} 

Joint garbling is a natural concept, as the next period state and public signal are jointly informative about the current actions. After all, it is simply the standard Blackwell garbling of a random vector. However, \cite{Kim_2019_IJGT} shows that the PPE payoff set may shrink given a fixed discount factor when the monitoring becomes more informative in the joint garbling order.

It is useful to understand why the monotonicity with respect to the joint garbling order fails for stochastic games. For repeated games (with a public randomization device), the monotonicity with respect to the standard garbling order holds for the following reason. Suppose that a payoff profile $v \in \mathbb{R}^N$ can be supported by an action profile $a$ and continuation payoffs $w:Y \rightarrow E^\delta(\Pi)$, where $E^\delta(\Pi)$ is the PPE payoff set given $\delta$ and $\Pi$. Suppose that $\Pi$ is a garbling of $\Pi^\prime$; then there exists $\phi: Y' \rightarrow \Delta(Y)$ such that $p(y|a) = \sum_{y' \in Y'}\phi(y|y') p(y'|a)$. Then, $v$ can be supported by the same action profile and continuation payoffs $w': Y' \rightarrow \mathbb{R}^N$ defined by $w'(y') = \sum_{y \in Y}\phi(y|y') w(y)$ for $\Pi^\prime$. Since the equilibrium payoff set is convex (with a public randomization device), this means that $E^\delta(\Pi)$ is self-generating for $\Pi^\prime$ as well. Hence, the equilibrium payoff set $E^\delta(\Pi^\prime)$ must contain $E^\delta(\Pi)$. This argument does not fully extend to stochastic games. Suppose that $v$ can be supported at state $s$ by an action profile $a$ and continuation payoffs $w: S \times Y \rightarrow \prod_{t \in S}E^\delta(t; \Pi)$, where $E^\delta(t; \Pi)$ is the PPE payoff set at state $t$ given $\delta$ for $\Pi$. Also, suppose that $\Pi$ is a joint garbling of $\Pi^\prime$. We can define continuation payoffs $w': S \times Y' \rightarrow \mathbb{R}^N$ in a similar way by $w'(t',y') = \sum_{(t,y) \in S \times Y}\phi_s(t,y|t',y') w(t,y)$. Although all the incentive constraints are preserved as before, $w'(t',y')$ may not belong to $E^\delta(t'; \Pi)$ because the equilibrium payoff sets differ across states. Hence $\prod_{t \in S}E^\delta(t; \Pi)$ in stochastic games is not self-generating for $\Pi'$ unlike $E^\delta(\Pi)$ in repeated games. Intuitively, the issue is that the next-period state is not only an informative signal but also affects the feasibility of continuation payoffs in equilibrium.

In this paper, we introduce the following simple yet even weaker notion of garbling, called \emph{weighted garbling}, and show that it captures the informativeness of monitoring structures for stochastic games \emph{in the limit} as $\delta \rightarrow 1$.

\begin{defn}[Weighted Garbling]
	\label{defn:WG}
	A monitoring structure $\Pi=(Y, f)$ is a \emph{weighted garbling} of $\Pi' = (Y', f')$ if, for every $s \in S$, there exist nonnegative weights $\gamma^{t^\prime,y^\prime}_s \geq 0$, $\forall  (t^\prime,y^\prime) \in S \times Y^\prime$, and $\phi_s: S \times Y^\prime \rightarrow \Delta(S \times Y)$ such that for each $a \in A$,
	\begin{equation}
		\label{eq:nnnnn1}
		p(t,y|s,a)= \sum_{(t^\prime,y^\prime) \in S \times Y^\prime}\gamma^{t^\prime,y^\prime}_s \phi_s(t,y|t^\prime,y^\prime)p'(t^\prime,y^\prime|s,a) , \quad \forall (t,y) \in S \times Y.
	\end{equation}
	We say that $\Pi^\prime$ is \emph{more WG-informative} than $\Pi$ if $\Pi$ is a weighted garbling of $\Pi^\prime$. 
\end{defn}

Note that the weights do not depend on action profile $a$. A standard garbling (i.e., joint garbling) is a special type of weighted garbling where all the weights are equal to $1$.

In the following, we also say that $p$ is a joint/weighted garbling of $p'$ when $\Pi$ is a joint/weighted garbling of $\Pi^\prime$, where $p$ and $p'$ are generated from $\Pi$ and $\Pi^\prime$, respectively.

Weighted garbling can be interpreted in various ways. First, we can regard it as a standard garbling with some action-independent deformation of monitoring structure. Note that the expected value of the weights must be $1$ given every $(s,a) \in S \times A$ by definition, i.e., 
$\sum_{(t^\prime,y^\prime) \in S \times Y'} \gamma^{t^\prime,y^\prime}_s p'(t^\prime,y^\prime|s,a) =1$.\footnote{If the matrix of size $\left|A\right| \times (\left|S\right| \times \left|Y'\right|) $, where each row indexed by $a \in A$ corresponds to $p'(\cdot, \cdot|s,a)$, has full-column rank, then $\gamma^{t^\prime,y^\prime}_s = 1, \ \forall (t',y')$ is the only solution that satisfies this condition at $s$. If this is the case for every state $s$, then weighted garbling reduces to joint garbling. Weighted garbling is strictly more permissive than joint garbling only when there are many public signals relative to the number of actions. This is why we need three signals for $\Pi^\prime$ in \autoref{exmp:nnn2}.} 
Hence, we can regard $p^\gamma(t^\prime,y^\prime|s,a) := \gamma^{t^\prime,y^\prime}_s p'(t^\prime,y^\prime|s,a)$ as a proper joint distribution on $S \times Y^\prime$. Thus the weights transform a joint distribution $p^\prime$ into another joint distribution $p^\gamma$ in such a way that $p$ is a standard (joint) garbling of $p^\gamma$.

We can also interpret weighted garbling as \emph{conditional Blackwell informativeness}. Let $\overline{\gamma} \equiv \max_{s,t^\prime,y^\prime}\gamma^{t^\prime,y^\prime}_s$ be the \emph{size} of the weights, and let $\overline{\gamma}^{t^\prime,y^\prime}_s \equiv \gamma^{t^\prime,y^\prime}_s/\overline{\gamma} \in [0,1]$ for each $(t',y')$.
Then, condition \eqref{eq:nnnnn1} becomes: 
\[
\frac{1}{\overline{\gamma}} p(t,y|s,a)= \sum_{(t^\prime,y^\prime) \in S \times Y^\prime}\overline{\gamma}^{t^\prime,y^\prime}_s \phi_s(t,y|t^\prime,y^\prime)p'(t^\prime,y^\prime|s,a).
\]
This is as if $(t,y)$ is generated with probability $\phi_s(t,y|t^\prime,y^\prime)$, conditional on some event that happens with probability $\overline{\gamma}_s^{t',y'}$ when $(t',y')$ realizes at state $s$. Since the average weight is $1$ for each $(s, a) \in S \times A$, the ex-ante probability of this event is $1/\overline{\gamma}$ independent of $(s,a)$, which appears on the left-hand side. Hence, this expression means that $\Pi$ can be interpreted as a joint garbling of some \emph{conditional monitoring structure} that arises from $\Pi^\prime$, conditional on some event occurring with probability $\overline{\gamma}_s^{t',y'}$ given $(t', y')$ at $s$. We can also see this as a two-step garbling of $\Pi^\prime$. Suppose that, when $(t',y')$ occurs, $(t',y')$ is observed with probability $\overline{\gamma}_s^{t',y'}$, but only a null signal $n$ is observed with probability $1-\overline{\gamma}_s^{t',y'}$. This generates a joint distribution $p^n$ on $(S \times Y') \cup \{n\}$. This distribution is a standard garbling of $p^\prime$ since $\overline{\gamma}_s^{t',y'}$ does not depend on $a$. As a second step, we can extend $\phi_s$ to $(S \times Y') \cup \{n\}$ by defining $\phi_s(n|n) =1$ to show that $(1/\overline{\gamma}) p + (1-1/\overline{\gamma})  \delta_n$ is a standard garbling of $p^n$, where $\delta_n$ is the Dirac measure on $n$. Hence, by transitivity, $(1/\overline{\gamma})p + (1-1/\overline{\gamma}) \delta_n$ is a standard garbling of $p^\prime$ when $p$ is a weighted garbling of $p^\prime$ with weight size $\overline{\gamma}$.\footnote{The converse also holds because $p$ is a weighted garbling of $(1/\overline{\gamma}) p + (1-1/\overline{\gamma}) \delta_n$ (with $0$ weight on the null signal) and the weighted garbling order can be shown to be transitive.}

We present two examples of weighted garbling. The first example involves a repeated game, which is a special case of stochastic games (i.e., a game with a single state). For repeated games, both ex-post garbling and joint garbling reduce to the standard garbling, whereas weighted garbling does not.

\begin{example}
	\label{exmp:nnn2}
	
	Consider a simple repeated Prisoners' Dilemma game with two actions $\left\{C,D\right\}$ and the following two monitoring structures $\Pi = (Y, f)$ and $\Pi' = (Y', f')$:
	\begin{itemize}
		\item $\Pi = (Y, f)$: $Y=\{c,d\}$ and  
		$$
		f(c|a_1,a_2)=\begin{cases}
			1-\eta & \text{if $(a_1,a_2)=CC$}\\
			\eta & \text{otherwise}
		\end{cases}
		$$
		\item $\Pi' = (Y', f')$: $Y'=Y \cup\{n\}$ and 
		$$
		f'(c|a_1,a_2)=\begin{cases}
			\epsilon (1-\eta') & \text{if $(a_1,a_2)=CC$}\\
			\epsilon \eta' & \text{otherwise}
		\end{cases}
		$$		
		and 
		$$
		f'(n|a_1,a_2)= 1-\epsilon \ \text{for any $(a_1,a_2)$}
		$$
	\end{itemize}
	where $0< \eta' < \eta < 0.5$ and $\epsilon \in \left(0,1\right]$.

	In the first monitoring structure, the players observe a binary signal that is incorrect with probability $\eta$. In the second monitoring structure, there are three possible signals: the players observe a more informative binary signal with probability $\epsilon$, which is incorrect with a smaller probability $\eta^\prime < \eta$, or they observe no signal (represented by the null signal $n$) with probability $1-\epsilon$.

	These two monitoring structures are not comparable in the sense of Blackwell if $\epsilon$ is not too large: the first monitoring structure cannot be more informative than the second because the posterior beliefs (about actions) can take more extreme values for the second. The second monitoring structure is not more informative than the first either, if $\epsilon$ is small enough.\footnote{The precise condition is $(0.5 - \eta)/(0.5 -\eta') > \epsilon$.} Nonetheless, the second monitoring structure is more WG-informative than the first for any $\epsilon \in \left(0,1\right]$. This can be easily seen from the earlier discussion on interpreting weighted garbling as conditional Blackwell informativeness: the second structure is more informative conditional on observing the more informative binary signal. For instance, we can assign $\gamma^c = \gamma^d = 1/\epsilon, \ \gamma^n = 0$ as weights and set $\phi(c|c) = \phi(d|d) = (1-\eta' -\eta)/(1-2\eta')$ to confirm that $\Pi$ is a weighted garbling of $\Pi^\prime$.
\end{example}

We present the next example to illustrate and compare various notions of garbling.

\begin{example}
	\label{exmp:WG}
	
	Let $S = \{ s_1, s_2\}$. The state transition $q$ is action-independent and is given by $q(t|s) = 1/2$ for each $(t,s) \in S \times S$. Consider the monitoring structures $\Pi = (Y,f) $ and $\Pi' = (Y,f')$, where $Y = \{c,d\}$:
	
	\begin{itemize}
		\item For each $(t,s) \in S^2$, 
		$$
		f(c|t,s, (a_1,a_2))=\begin{cases}
			\frac{2}{3} & \text{if $(a_1,a_2)=CC$}\\
			\frac{1}{3} & \text{otherwise}
		\end{cases}
		$$
		\item For each $s \in S$
		$$
		f' (c|t=s_1,s, (a_1, a_2))=\begin{cases}
			\frac{3}{4} & \text{if $(a_1,a_2)=CC$}\\
			\frac{1}{4} & \text{otherwise}
		\end{cases}
		$$
		and
		$$
		f' (c|t=s_2,s, (a_1, a_2))=\begin{cases}
			\frac{2}{3} - \epsilon & \text{if $(a_1,a_2)=CC$}\\
			\frac{1}{3} + \epsilon & \text{otherwise}
		\end{cases}
		$$
		where $\epsilon \in [0,1/6]$ 
	\end{itemize}
	
	Note that the public signal for $\Pi^\prime$ is strictly more informative than the public signal for $\Pi$ when the next state is $s_1$, but strictly less informative with $\epsilon > 0$ when the next state is $s_2$. The induced joint distributions $p$ and $p'$ from $(q,\Pi)$ and $(q, \Pi')$ are described in \autoref{fig:nnn1}. 
	\begin{figure}
		\center
		\begin{subfigure}[t]{0.4\textwidth}
			\begin{tabular}{|c|c|c|}
				\hline
				$p' (t,y|s,a)$    & $CC$                       & $\neg CC$                  \\ \hline
				$s_1,c$ & $\frac{3}{8}$    & $\frac{1}{8}$     \\ \hline
				$s_1,d$ & $\frac{1}{8}$     & $\frac{3}{8}$     \\ \hline
				$s_2,c$ & $\frac{1}{2}\left(\frac{2}{3} - \epsilon\right)$ & $\frac{1}{2}\left(\frac{1}{3}+ \epsilon\right)$ \\ \hline
				$s_2,d$ & $\frac{1}{2} \left(\frac{1}{3} + \epsilon\right)$ & $\frac{1}{2}\left(\frac{2}{3}- \epsilon\right)$ \\ \hline
			\end{tabular}
		\end{subfigure} 
		\qquad
		\begin{subfigure}[t]{0.4\textwidth}
			\begin{tabular}{|c|c|c|}
				\hline
				$p (t,y|s,a)$    & $CC$                       & $\neg CC$                  \\ \hline
				$s_1,c$ & $\frac{1}{3}$    & $\frac{1}{6}$     \\ \hline
				$s_1,d$ & $\frac{1}{6}$     & $\frac{1}{3}$     \\ \hline
				$s_2,c$ & $\frac{1}{3}$ & $\frac{1}{6}$ \\ \hline
				$s_2,d$ & $\frac{1}{6}$ & $\frac{1}{3}$ \\ \hline
			\end{tabular}
		\end{subfigure}
		\caption{The joint distribution $p'$ and $p$ in \autoref{exmp:WG}}
		\label{fig:nnn1}
	\end{figure}

	We make the following observations:
	\begin{enumerate}
		\item $\Pi$ is an ex-post garbling of $\Pi'$ if and only if $\epsilon = 0$. 
		\item $\Pi$ is a joint garbling of $\Pi'$ if and only if $\epsilon \in [0, 1/12]$.
		\item For any $\epsilon$, $\Pi$ is a weighted garbling of $\Pi'$. 
	\end{enumerate}
	To see the first item, note that if $\epsilon > 0$, then conditional on $t= s_2$, the public signal is strictly more informative for $\Pi$. Hence, $\Pi$ cannot be an ex-post garbling of $\Pi'$. The converse holds trivially by the definition of ex-post garbling. For the second item, it is intuitively clear that $\Pi'$ becomes more informative than $\Pi$ in the standard Blackwell sense if $\epsilon$ is sufficiently small. However, determining the precise upper bound of $\epsilon$ for which this holds requires some work.\footnote{Suppose that $\Pi$ is a joint garbling of $\Pi^\prime$ and let $\phi_s (y|t',y') \equiv \phi_s (s_1, y |t', y') + \phi_s (s_2, y|t',y')$ be the marginal distribution of the public signal given $(t', y')$ for garbling function $\phi_s$. This $\phi_s(y|t',y')$ must satisfy $2/3= 1/2 (3/4\phi_s (c|s_1, c) + 1/4 \phi_s (c|s_1,d) )+ 1/2 ( (2/3 - \epsilon) \phi_s (c|s_2, c) + (1/3+\epsilon) \phi_s (c|s_2, d) )$ for $(a_1, a_2) = CC$ and $1/3 = 1/2 (  1/4\phi_s (c|s_1, c) +  3/4 \phi_s (c|s_1,d) )  +  1/2 ( ( 1/3 + \epsilon )  \phi_s (c|s_2, c) + ( 2/3 - \epsilon ) \phi_s (c|s_2, d)  )$ for $(a_1, a_2) \neq  CC$. Conversely, if there exists $\phi_s(y|t', y')$ that satisfies the above conditions, then we can find a joint distribution $\phi_s (t, y |t', y')$ satisfying the definition of joint garbling, as the next state and public signal are independent for $\Pi$. It can be easily shown that such distribution $\phi_s(y|t', y')$ exists if and only if $\epsilon \in \left[0,1/12 \right]$.}
	To see the third item, note that we can use the following weights: $\gamma_s^{s_1,c} = \gamma_s^{s_1, d} = 2, \gamma_s^{s_2,c} = \gamma_s^{s_2, d} = 0$ for any $s$, and the following garbling: $\phi_{s}(t, c|s_1, c) = \phi_{s}(t, d|s_1, d) = 5/12$ and $\phi_{s}(t, d|s_1, c) = \phi_{s}(t, c|s_1, d) = 1/12$ for any $s, t$.
\end{example}

Observe that a monitoring structure is a greatest element in the WG-order if and only if it is a \emph{conditionally perfect monitoring}.\footnote{Formally, $\Pi = (Y,f)$ is a conditionally perfect monitoring structure if, at each $s \in S$ and $a \in A$, there exists a subset $D_{s,a} \subseteq S \times Y$ such that $p(D_{s,a}|s,a) > 0$ but $p(D_{s,a}|s,a') = 0$ for any $a' \neq a$. This means that the players would know that $a$ was played if they observe any $(t,y) \in D_{s,a}$ at state $s$.} To see this, note that a conditionally perfect monitoring structure is as informative as a perfect monitoring structure in the WG-order.\footnote{A monitoring structure $\Pi = (Y, f)$ is a perfect monitoring structure if, at each $s \in S$, the support of $p(\cdot, \cdot |s,a)$ in $S \times Y$ never overlaps for any two distinct action profiles.} Then, since a monitoring structure is a greatest element in the Blackwell order if and only if it is a perfect monitoring structure, every conditionally perfect monitoring structure must be a greatest element in the WG-order by transitivity.\footnote{It is straightforward to show that the WG-order is transitive.} Conversely, it is clear that any monitoring structure that is not conditionally perfect is strictly dominated by some conditionally perfect monitoring structure.

Our main result in the next section shows that, if $\Pi$ is a weighted garbling of $\Pi'$, then the set of PPE payoffs that can be supported given any initial state is weakly larger with $\Pi^\prime$ in the limit as $\delta \rightarrow 1$ given any payoff functions under some regularity assumptions.

Since joint garbling is a special case of weighted garbling, one corollary of our result is that the monotonicity of PPE payoff set with respect to the standard garbling (i.e., joint garbling) is reestablished in the limit. Recall that this monotonicity does not hold for stochastic games for a fixed discount factor.
\autoref{tab:nnn1} summarizes the relationship between various notions of garbling and the expansion of the PPE payoff set.

\begin{table}[t]
	\center
	\begin{subtable}{0.45\textwidth}
		\center
		\begin{tabular}{|c|c|c|}
			\hline
			& fixed $\delta$ & limit             \\ \hline
			ex-post=joint & o              & o \\ \hline
			weighted        & x              & o* \\ \hline
		\end{tabular}
		\subcaption{Repeated games}
	\end{subtable}
	\qquad 
	\begin{subtable}{0.45\textwidth}
		\center
		\begin{tabular}{|c|c|c|}
			\hline
			& fixed $\delta$ & limit             \\ \hline
			ex-post  & o              & o                 \\ \hline
			joint    & x              & o* \\ \hline
			weighted & x              & o*\\ \hline
		\end{tabular}
		\subcaption{(Irreducible) stochastic games}
	\end{subtable}

	\caption{Table summarizing the expansion of PPE with respect to various notions of garbling for a fixed discount factor (``fixed $\delta$'') or in the limit (``limit'') in repeated games and stochastic games. If an improvement in the monitoring structure in terms of the corresponding notion guarantees the expansion, the relevant box is marked with ``o''; otherwise, it is marked with ``x.'' The contributions of the current paper are indicated with a star.}
	\label{tab:nnn1}
\end{table}

\subsection{Weighted Garbling for Action-Independent Transition}

For applications, it would be useful to have a simple way to rank monitoring structures by weighted garbling. Here, we provide a simple sufficient condition for weighted garbling when the state transition is action-independent. This sufficiency condition generalizes the idea behind \autoref{exmp:WG}.

Let us introduce some new notations to state our condition. For monitoring structure $\Pi = (Y, f)$, we can regard $f( \cdot |t, s, \cdot)$ as a conditional monitoring structure that maps each action profile to a public signal distribution conditional on $(s,t) \in S^2$. We simply refer to this as \emph{$f$ conditional on $(s,t)$}. We can compare such conditional monitoring structures in the WG-order. We say $f^\prime$ conditional on $(s', t')$ for $\Pi^\prime$ is more WG-informative than $f$ conditional on $(s, t)$ for $\Pi$ if there exist a weight $\gamma^{y'} \geq 0$, $\forall y' \in Y'$, and $\phi: Y^\prime \rightarrow \Delta (Y)$ that satisfy $f(y|t, s, a) = \sum_{y^\prime \in Y^\prime} \gamma^{y^\prime}\phi(y | y^\prime)f^\prime(y^\prime|t',s',a)$ for all $y \in Y$ and all $a \in A$.

Let $S(s) \subseteq S$ denote the support of action-independent transition $q( \cdot |s)$ at state $s$. The following proposition provides a simple sufficient condition in terms of $f'$ and $f$ for $\Pi'$ to be more WG-informative than $\Pi$ for the special case of action-independent transitions. 

\begin{prop}
	\label{prop:nnn1}
	Suppose that the transition law is action-independent. For $\Pi = (Y, f)$ and $\Pi^\prime =(Y^\prime, f^\prime)$, suppose that for any $s \in S$, there exists $T_s:S(s) \rightarrow S(s)$ such that $f^\prime$ conditional on $(s, T_s(t))$ is more WG-informative than $f$ conditional on $(s, t)$ for any $t \in S(s)$. Then, $\Pi^\prime$ is more WG-informative than $\Pi$.
\end{prop}

\begin{proof}
	See \Cref{proof:prop1}.
\end{proof}

It may be helpful to compare this to the condition for ex-post garbling. Note that $\Pi$ is an ex-post garbling of $\Pi^\prime$ if and only if $f'$ conditional on $(s,t)$ is more informative than $f$ conditional on the same $(s,t)$ for all $(s, t) \in S^2$. In contrast,  $\Pi$ is a weighted garbling of $\Pi^\prime$ if, for each transition $(s,t)$, there exists some $t'$ such that  $f'$ conditional on $(s,t')$ is more WG-informative than $f$ conditional on $(s,t)$. This condition is satisfied, for example, if the public signal for $\Pi^\prime$ is very informative conditional on staying at the current state, so that $f^\prime$ conditional on $(s,s)$ is more informative than $f$ conditional on $(s,t)$ for any $t \in S$. Note that $f^\prime$ does not need to be informative given any other transition $(s,t)$ with $t \neq s$ in this example. \autoref{exmp:WG} is another example of this proposition, as the public signal is most informative for $\Pi^\prime$ when and only when the next state is $s_1$.

The action-independence of transitions is crucial for this result. If the state transitions depend on actions, it would not be possible to compare two monitoring structures based solely on the public signals, as the next state would also be informative about the current actions.

\section{Weighted Garbling and the Limit PPE Payoff Set}
\label{sec:4}

In this section, we show that an improved monitoring structure in terms of weighted garbling expands the PPE payoff set when players are arbitrarily patient. We then provide an example to illustrate the main ideas. Additionally, we show that the PPE payoff set may expand strictly when the monitoring improves strictly in a certain sense.

\subsection{A Limit Characterization of PPE Payoff Set by HSTV (2011)}
\label{subsec:4.1}

Since we utilize HSTV's characterization of the limit PPE payoff set for stochastic games with imperfect public monitoring, we summarize their relevant results in this subsection.

HSTV considers a collection of programming problems, which generalizes the programming problems introduced by \cite{FL_1994_JET} for repeated games, to study the limit PPE payoff set for stochastic games.

Given a weight $\lambda \in \Lambda \equiv \mathbb{R}^N \setminus \{ \mathbf{0} \}$, consider the following problem $\mathcal{P}(\lambda)$.\footnote{Our notation of $x = \left(x_s, s \in S\right)$ is slightly different from that of HSTV as we use a lower subscript to denote the current state; our $x_{s}(t,y)$ corresponds to $x_{t}(s,y)$ in HSTV.}  \\

$k (\lambda) : = \sup_{v \in \mathbb{R}^N, \ \alpha_{s} \in \prod_{i \in I} \Delta (A_i) \ \forall s, \ x_s(t,y) \in \mathbb{R}^{N} \ \forall s,t,y} \lambda \cdot v$ 
s.t.
\begin{equation}
	\label{eq:1}
	v= u (\alpha_s, s) + \sum_{ (t, y) \in S \times Y} x_s (t,y) p (t,y|s, \alpha_s),\quad \forall s \in S
\end{equation}
\begin{equation}
	\label{eq:2}
	v_i \geq u_i ( (a_i', \alpha_{-i,s}) ,s) + \sum_{(t, y) \in S \times Y} x_{i,s} (t,y) p (t,y|s, (a_i', \alpha_{-i,s})),\quad \forall s \in S,i \in I,  a_i' \in A_i 
\end{equation}
and 
\begin{equation}
	\label{eq:3}
	\lambda \cdot \sum_{ s \in T} x_s (\xi(s),\psi(s)) \leq 0
\end{equation}
for any subset $T \subseteq S$, permutation $\xi:T \to T$, and mapping $\psi: T \to Y$.\footnote{$k(\lambda)$ is finite (see Section 3.4 of HSTV).} That is, given a weight $\lambda \in \Lambda$, $k(\lambda)$ is the maximum state-independent weighted average of players' payoffs $v$ that can be generated for each $s \in S$ with some action profile $\alpha_s$ and some $(t,y)$-contingent ``payment'' $x_s$ that makes $\alpha_s$ incentive compatible (i.e., \eqref{eq:2}) and satisfies some across-states feasibility condition (i.e., \eqref{eq:3}). Each $x_s(t,y)$ represents the (discounted) increments of the total payoffs between the current period and the next period.\footnote{Take any average payoff $v = (1-\delta) u(a,s) + \delta \sum_{(t,y) \in S \times Y}w_s(t,y) p(t,y|s,a)$. With a standard reformulation, this $v$ can be expressed as the sum of the current stage game payoff and the (discounted) difference between the total payoffs from the next period and the total payoff from the current period as follows: $v = u(a,s) + \sum_{(t,y) \in S \times Y}(\delta/(1-\delta))(w_s(t,y)-v)p(t,y|s,a)$. $x_s(t,y)$ corresponds to $(\delta/(1-\delta))(w_s(t,y)-v)$ in the second term of this expression.} Note that this program does not involve $\delta$.

Let $$\mathcal{H} \equiv \bigcap_{ \lambda \in \Lambda} \{ v| \lambda \cdot v \leq k (\lambda)\}.$$ Let $E^\delta (s)$ be the set of PPE payoffs given $\delta$ and initial state $s$, which is compact. HSTV invokes the following assumption to obtain some of their results ((ii) in \autoref{thm:HSTV}).
\begin{assumption}
	\label{ass:1}
	For any $s,s' \in S$, 
	$$\lim_{ \delta \to 1} d (E^\delta (s), E^\delta (s')) = 0,$$
	where $d (E^\delta (s), E^\delta (s'))$ is the Hausdorff distance between $E^\delta (s)$ and $E^\delta (s')$.
\end{assumption}
This assumption is satisfied, for example, when the Markov chain on $S$ is irreducible (or has a unique invariant distribution more generally) given any pure Markov strategy profile.

HSTV proves the following result.
\begin{thm}[HSTV]
	\label{thm:HSTV}
	The following hold:
	\begin{enumerate}[label=(\roman*)]
		\item For each $\delta \in [0,1)$, $\bigcap_{s \in S} E^\delta (s) \subseteq \mathcal{H}.$
		\item Suppose that $\dim (\mathcal{H}) = N$. Then, $\lim_{ \delta \to 1} \bigcap_{s \in S} E^\delta (s) = \mathcal{H}$. In particular, under \autoref{ass:1}, $\lim_{\delta \to 1}E^\delta (s) = \mathcal{H}$ for each $s$.
	\end{enumerate}
\end{thm}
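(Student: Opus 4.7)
\medskip

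\noindent\textbf{Proof plan.} My plan is to handle the two parts by the two halves of the standard APS (Abreu--Pearce--Stacchetti) self-generation apparatus, adapted to the state-dependent setting.

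For (i), I would start from the recursive structure of PPE: if $v \in \bigcap_{s\in S} E^\delta(s)$, then at each state $s$ there exist an action profile $\alpha_s$ and a continuation function $w_s(t,y)$ with $w_s(t,y) \in E^\delta(t)$ such that $v = (1-\delta) u(\alpha_s,s) + \delta \sum_{(t,y)} w_s(t,y) p(t,y|s,\alpha_s)$ and all one-shot IC inequalities are met. Setting $x_s(t,y) = \tfrac{\delta}{1-\delta}(w_s(t,y)-v)$ converts these facts directly into the constraints \eqref{eq:1} and \eqref{eq:2} of $\mathcal{P}(\lambda)$, so it remains only to verify \eqref{eq:3}.

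The verification of \eqref{eq:3} is the heart of part (i). Given $T$, a permutation $\xi$ on $T$, and $\psi:T \to Y$, the condition reduces to $\sum_{s\in T} \lambda \cdot w_s(\xi(s),\psi(s)) \le |T|\,\lambda\cdot v$. The intuition is that along a cycle of state transitions this quantity represents a ``net $\lambda$-weighted transfer through the cycle,'' and allowing it to be strictly positive would let us compound it over infinitely many rotations of the cycle, producing a PPE value strictly above $\lambda\cdot v^{*}$, where $v^{*}$ maximizes $\lambda \cdot v$ over $\bigcap_s E^\delta(s)$. I would formalize this by taking $v$ to be such a $\lambda$-support point, reindexing the sum along $\xi$ so each $w_s(\xi(s),\psi(s))$ is bounded by the $\lambda$-support of $E^\delta(\xi(s))$, and pushing the resulting inequality back through the recursion to obtain a contradiction unless \eqref{eq:3} holds.

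For (ii) I would run the converse self-generation. The ingredients delivered by the LP $\mathcal{P}(\lambda)$ for each supporting $\lambda$ of $\mathcal{H}$ are precisely an action profile and a transfer vector $x_s(t,y)$ satisfying \eqref{eq:1}--\eqref{eq:3}. Given $v^{*}$ in the relative interior of $\mathcal{H}$ and $\delta$ close to $1$, one reads $x_s(t,y)$ as $\tfrac{\delta}{1-\delta}(w_s(t,y)-v^{*})$ and places the induced $w_s(t,y)$ inside $\mathcal{H}$. The key use of $\dim(\mathcal{H})=N$ is standard: full dimension lets one perturb the continuations in every coordinate so that the assembled continuations still lie in a $\delta$-self-generating set contained in $\mathcal{H}$, and constraint \eqref{eq:3} guarantees that these perturbations are jointly feasible across all originating states. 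Self-generation then yields $\mathcal{H} \subseteq \bigcap_s E^\delta(s)$ in the limit, and (i) gives the reverse inclusion; under \autoref{ass:1} the sets $E^\delta(s)$ all collapse to the same Hausdorff limit, which must then equal $\mathcal{H}$.

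I expect the main obstacle to be the ``permutation'' constraint \eqref{eq:3}. In (i) the obstacle is showing that APS decompositions, which are a priori only pointwise objects at each state, automatically obey this multi-state combinatorial inequality; the contradiction-via-cycling argument above is delicate because one must keep track of feasibility along arbitrary subsets $T$ and arbitrary permutations. In (ii) the dual difficulty is turning the LP solution into genuine continuation PPEs simultaneously across states without violating \eqref{eq:3}, which is exactly where full-dimensionality carries the weight. The Hausdorff-limit passage and the reduction from $\bigcap_s E^\delta(s)$ to individual $E^\delta(s)$ under \autoref{ass:1} should then be comparatively routine.
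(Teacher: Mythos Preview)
The paper does not prove this theorem. It is explicitly attributed to H\"orner, Sugaya, Takahashi, and Vieille (2011) and is stated purely as background (``HSTV proves the following result''), then invoked as a black box in the proofs of Theorems~\ref{thm:mono} and~\ref{thm:score}. There is therefore no proof in the paper to compare your proposal against.

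On the substance of your sketch: the overall APS strategy is indeed what underlies HSTV's argument, but your verification of \eqref{eq:3} in part (i) has a gap as written. You take $v^{*}$ to maximize $\lambda\cdot v$ over $\bigcap_{s}E^{\delta}(s)$ and then bound each $\lambda\cdot w_{s}(\xi(s),\psi(s))$ by the $\lambda$-support of $E^{\delta}(\xi(s))$. But that support, call it $k^{*}_{\xi(s)}$, can strictly exceed $\lambda\cdot v^{*}$, since $\bigcap_{t}E^{\delta}(t)\subseteq E^{\delta}(\xi(s))$; summing along the permutation yields $\sum_{s\in T}\lambda\cdot w_{s}(\xi(s),\psi(s))\le \sum_{t\in T}k^{*}_{t}$, which is on the wrong side of the target $|T|\,\lambda\cdot v^{*}$. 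Your alternative ``compound over infinitely many rotations of the cycle'' heuristic does not obviously close this, because the continuations $w_{s}(t,y)$ are arbitrary points of $E^{\delta}(t)$, not support points that re-enter the same decomposition. HSTV's actual proof works with the full vector of state-dependent maxima $(k^{*}_{s})_{s\in S}$ and uses the permutation structure together with a fixed-point/averaging argument across states to obtain the bound; the passage from state-dependent maxima to the state-independent $v$ required by \eqref{eq:1} is precisely where the permutation condition \eqref{eq:3} earns its keep, and that step is missing from your outline.
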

The first item means that $\mathcal{H}$ is an upper bound (in the sense of set inclusion) of the intersection of the PPE payoff sets over initial states for each $\delta$. The second item says that, under the assumption of full-dimensionality, the intersection of the PPE payoff sets converges to $\mathcal{H}$ in Hausdorff distance. In addition, this limit set coincides with the limit PPE payoff set at every state if \autoref{ass:1} is satisfied.\footnote{\autoref{ass:1} and the full-dimensionality assumption together imply the convergence of $E^\delta(s)$ to the same set for every $s$ as $\delta \rightarrow 1$. For an example in which the PPE payoff set does not converge, see \cite{RZ_2020_MOR}.}

Based on this result, HSTV also provides a sufficient condition for a folk theorem, which naturally extends the sufficient conditions for the folk theorem for repeated games with imperfect public monitoring in \cite{FLM_1994_ECMA}.\footnote{HSTV's condition generalizes the individual full rank condition and the pairwise full rank condition in \cite{FLM_1994_ECMA}. See also \cite{FY_2011_JET} for closely related conditions, which are slightly stronger than HSTV's.}

\subsection{Main Result}

We are now ready to state our first main result. In what follows, we use notations such as $\mathcal{P}(\lambda; \Pi), k (\lambda; \Pi)$, $\mathcal{H}(\Pi)$, $E^\delta (s; \Pi)$ to explicitly indicate their dependence on the underlying monitoring structure.  

\begin{thm}
	\label{thm:mono}
	Suppose that $\dim (\mathcal{H} (\Pi) ) = N$ and $\Pi = (Y,f)$ is a weighted garbling of $\Pi' = (Y',f')$. Then, 
	$\lim_{ \delta \to 1} \bigcap_{s \in S} E^{ \delta } (s; \Pi) \subseteq \lim_{ \delta \to 1} \bigcap_{s \in S} E^{ \delta } (s; \Pi^\prime)$. 
	In particular, if \autoref{ass:1} is satisfied for both $\Pi$ and $\Pi^\prime$, then $\lim_{ \delta \to 1} E^{ \delta } (s; \Pi) \subseteq \lim_{ \delta \to 1} E^{ \delta } (s'; \Pi'),  \forall s, s' \in S.$
\end{thm}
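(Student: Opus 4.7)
The plan is to reduce the theorem to showing the set inclusion $\mathcal{H}(\Pi) \subseteq \mathcal{H}(\Pi')$ between HSTV's limit sets. Once this is established, the full-dimensionality of $\mathcal{H}(\Pi)$ forces $\dim(\mathcal{H}(\Pi'))=N$ as well (it contains an $N$-dimensional subset), so Theorem \ref{thm:HSTV}(ii) applies to both monitoring structures and yields
\[
\lim_{\delta \to 1} \bigcap_{s \in S} E^\delta(s;\Pi) = \mathcal{H}(\Pi) \subseteq \mathcal{H}(\Pi') = \lim_{\delta \to 1} \bigcap_{s \in S} E^\delta(s;\Pi').
\]
For the ``in particular'' clause, Assumption \ref{ass:1} combined with full-dimensionality upgrades this to $\lim_{\delta \to 1} E^\delta(s;\Pi) = \mathcal{H}(\Pi)$ uniformly in $s$, and analogously for $\Pi'$, giving the stated inclusion for all $s,s' \in S$.

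The key inclusion $\mathcal{H}(\Pi) \subseteq \mathcal{H}(\Pi')$ is equivalent to $k(\lambda;\Pi) \leq k(\lambda;\Pi')$ for every $\lambda \in \Lambda$. I would prove this by transporting feasible triples from $\mathcal{P}(\lambda;\Pi)$ to $\mathcal{P}(\lambda;\Pi')$. Given any feasible $(v,(\alpha_s)_{s},(x_s)_{s})$ for $\mathcal{P}(\lambda;\Pi)$, I keep $v$ and $(\alpha_s)_s$ unchanged and define $x'_s(t,y')$ as the appropriate weighted combination (in $y \in Y$) of the values $x_s(t,y)$ dictated by the weighted-garbling representation of $f$ in terms of $f'$. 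The representation is arranged precisely so that
\[
\sum_{(t,y') \in S \times Y'} x'_s(t,y')\, p'(t,y'|s,a) \;=\; \sum_{(t,y) \in S \times Y} x_s(t,y)\, p(t,y|s,a)
\]
holds for every $s$ and every action $a \in \prod_i A_i$, whence constraints \eqref{eq:1} and \eqref{eq:2} are inherited verbatim from the original triple.

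The main obstacle will be verifying the across-states condition \eqref{eq:3} for the constructed $x'$. For an arbitrary subset $T$, permutation $\xi:T \to T$, and mapping $\psi':T \to Y'$, expanding $\lambda \cdot \sum_{s \in T} x'_s(\xi(s),\psi'(s))$ via the garbling weights produces a weighted sum, indexed by choices of $y \in Y$, of terms of the form $\lambda \cdot x_s(\xi(s),y)$; the task is to re-index this expansion as a non-negative combination of expressions $\lambda \cdot \sum_{s \in T} x_s(\xi(s),\psi(s))$ for various $\psi:T \to Y$, each of which is $\leq 0$ by feasibility of $x$ in $\mathcal{P}(\lambda;\Pi)$. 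This is exactly the property that the definition of \emph{weighted} garbling (as opposed to an ordinary one) is tailored to deliver: it guarantees that any permutation/mapping pattern on $Y'$ can be ``pulled back'' to patterns on $Y$ with the appropriate book-keeping. Once this verification is carried out, $k(\lambda;\Pi) \leq k(\lambda;\Pi')$ follows for every $\lambda$, hence $\mathcal{H}(\Pi) \subseteq \mathcal{H}(\Pi')$, and the theorem is closed via the reduction described in the first paragraph.
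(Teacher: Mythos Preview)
Your reduction to the score inequality $k(\lambda;\Pi)\le k(\lambda;\Pi')$ and the subsequent invocation of Theorem~\ref{thm:HSTV} is exactly right and matches the paper. The gap is in how you build $x'_s$ and verify condition~\eqref{eq:3}.

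If you define $x'_s(t',y')$ as the weighted-garbling average $\sum_{(t,y)}\gamma_s^{t',y'}\phi_s(t,y\mid t',y')\,x_s(t,y)$, then \eqref{eq:1}--\eqref{eq:2} do carry over, but \eqref{eq:3} generally fails. Take $S=\{1,2\}$, $T=S$, $\xi$ the swap, and suppose $\lambda\cdot x_1(2,y)=5$ and $\lambda\cdot x_2(1,y)=-5$ for every $y$ (with all diagonal terms $0$). All instances of \eqref{eq:3} for $\Pi$ hold. But in the garbled expression $\lambda\cdot\bigl(x'_1(2,\psi'(1))+x'_2(1,\psi'(2))\bigr)$, the total weight placed on $x_1(2,\cdot)$ is $\gamma_1^{2,\psi'(1)}\phi_1(\{2\}\times Y\mid 2,\psi'(1))$ while the total weight on $x_2(1,\cdot)$ is $\gamma_2^{1,\psi'(2)}\phi_2(\{1\}\times Y\mid 1,\psi'(2))$; nothing in the definition of weighted garbling forces these to coincide, so the sum can be strictly positive. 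Your sentence ``this is exactly the property that the definition of weighted garbling is tailored to deliver'' is not correct: the weights $\gamma$ are there to capture \emph{conditional} informativeness of signals, not to make cycle sums in \eqref{eq:3} factor nicely. (If instead you meant to average only over $y$ with $t$ fixed, the expectation identity you wrote no longer holds, since $\phi_s$ mixes over $S\times Y$.)

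The missing idea is the paper's decomposition $x_s(t,y)=l_s(t)+z_s(t,y)$ with $l_s(t)=x_s(t,y^*)$, $y^*\in\arg\max_y\lambda\cdot x_s(t,y)$. One then sets $\tilde x_s(t',y')=l_s(t')+\sum_{(t,y)}\gamma_s^{t',y'}\phi_s(t,y\mid t',y')\,z_s(t,y)$. The point is that $l_s$ depends only on the transition (which is common to $\Pi$ and $\Pi'$), so \eqref{eq:3} for $l$ is inherited directly; and each $\lambda\cdot z_s(t,y)\le 0$ by construction, so any nonnegative mixture of $z$-terms also has nonpositive $\lambda$-score, giving \eqref{eq:3} for the $\tilde z$ part trivially. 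The expectation identity for \eqref{eq:1}--\eqref{eq:2} still holds because $\sum_{t'}l_s(t')q(t'\mid s,a)$ is the same under both structures and the $z$-part averages correctly by the garbling equation. This decomposition, not a direct pullback, is what closes the argument.
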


In words, the limit PPE payoff set of a stochastic game expands if the monitoring structure improves in terms of weighted garbling, under the full dimensionality assumption. This result has two implications. First, recall that joint garbling is a special case of weighted garbling. Unlike in repeated games, the PPE payoff set for stochastic games does not expand for a fixed $\delta$ when the monitoring structure improves in terms of joint garbling, even though it is just the standard Blackwell garbling. This theorem shows that the monotonicity of the PPE payoff set with respect to joint garbling is restored in the limit. Second, it shows that the limit monotonicity of the PPE payoff set can be achieved with respect to a much weaker garbling notion: weighted garbling. Consequently, we can compare a much larger set of monitoring structures when analyzing the limit equilibrium payoffs. Notably, this second part of the contribution is novel even in repeated games, since weighted garbling does not reduce to the standard garbling in repeated games.

The above result is most useful when the folk theorem does not hold, as there is still room for the limit equilibrium payoff set to expand. We also note that the result holds independent of payoff functions. The limit equilibrium payoff set would expand with a more informative monitoring structure for any payoff functions, as long as the full dimensionality assumption is satisfied.

Note that we do not require any public randomization device for our monotonicity result as the limit PPE payoff set $\mathcal{H}$ is already convex, whereas public randomization is required for the monotonicity of PPE payoff set given a fixed discount factor, because the PPE payoff set at each state may not be convex without a public randomization device with finite states and signals.\footnote{For repeated games with continuously distributed public signal, the PPE payoff set is convex without any public randomization device \cite[]{Kandori_1992_RES}.}

We prove this result by using our next result, which states that the maximum \emph{score} of problem $\mathcal{P}(\lambda)$ increases weakly for every direction $\lambda$, thereby yielding a weakly larger $\mathcal{H}$, when a monitoring structure improves in terms of weighted garbling. Together with \autoref{thm:HSTV}, this immediately implies the above main result as $\mathcal{H}(\Pi')$ is full dimensional when $\mathcal{H} (\Pi)$ is full dimensional.

\begin{thm}
	\label{thm:score}
	Suppose $\Pi = (Y, f)$ is a weighted garbling of $\Pi' = (Y', f')$. Then, for each $\lambda \in \Lambda$, $k (\lambda; \Pi) \leq k (\lambda; \Pi').$ Hence, $\mathcal{H} (\Pi) \subseteq \mathcal{H} (\Pi')$.
\end{thm}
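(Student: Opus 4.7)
The plan is to establish the directionwise inequality $k(\lambda; \Pi) \le k(\lambda; \Pi')$ for every $\lambda \in \Lambda$ via a transfer-of-feasibility argument, and then read off the set inclusion $\mathcal{H}(\Pi) \subseteq \mathcal{H}(\Pi')$ as an immediate consequence. The latter step is automatic: both sets are intersections of half-spaces of the form $\{v : \lambda \cdot v \le k(\lambda; \cdot)\}$, and a pointwise inequality of scores enlarges each defining half-space.

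For the score inequality, fix $\lambda \in \Lambda$ and any feasible triple $(v, (\alpha_s)_{s \in S}, (x_s)_{s \in S})$ for $\mathcal{P}(\lambda; \Pi)$. I will construct a feasible triple $(v, (\alpha_s), (\tilde x_s))$ for $\mathcal{P}(\lambda; \Pi')$ with the same objective value $\lambda \cdot v$, keeping the promised payoff vector and the action profiles intact and defining new payments by a linear pullback through the garbling: $\tilde x_s(t, y') := \sum_{y \in Y} q_s(y \mid y') \, x_s(t, y)$, where the coefficients $q_s(y \mid y')$ are the weights supplied by the weighted-garbling hypothesis, satisfying $\sum_{y'} q_s(y \mid y') \, p'(t, y' \mid s, a) = p(t, y \mid s, a)$ for every action profile $a$. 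Because these weights are action-independent, the promise-keeping equation (\ref{eq:1}) and the incentive constraints (\ref{eq:2}) for the new triple reduce, after interchanging the order of summation in $y$ and $y'$, to their counterparts for the original triple. These two families thus transfer for free.

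The main obstacle is the across-states feasibility constraint (\ref{eq:3}). For any $T \subseteq S$, permutation $\xi: T \to T$, and map $\psi': T \to Y'$,
$$\lambda \cdot \sum_{s \in T} \tilde x_s\bigl(\xi(s), \psi'(s)\bigr) \;=\; \sum_{s \in T} \sum_{y \in Y} q_s\bigl(y \mid \psi'(s)\bigr) \, \lambda \cdot x_s\bigl(\xi(s), y\bigr),$$
which must be shown to be $\le 0$. In the special case of ordinary joint garbling, each $q_s(\cdot \mid y')$ is a probability distribution, so the right-hand side rewrites as a convex combination, indexed by maps $\psi: T \to Y$, of the quantities $\lambda \cdot \sum_{s \in T} x_s(\xi(s), \psi(s))$, each of which is $\le 0$ by the hypothesis on the original triple, and the bound is immediate. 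For genuine weighted garbling, the $q_s$ need not be probability measures, so one must instead exploit the specific algebraic identities that the definition of weighted garbling imposes — identities tailored precisely so that the displayed right-hand side can still be bounded by a nonnegative combination of the available constraints (\ref{eq:3}) for $\Pi$. I expect this step to be the heart of the argument and the one most sensitive to the precise formulation of weighted garbling.

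Once (\ref{eq:3}) is verified, the triple $(v, (\alpha_s), (\tilde x_s))$ is feasible for $\mathcal{P}(\lambda; \Pi')$ with the same objective $\lambda \cdot v$. Taking the supremum over feasible triples for $\mathcal{P}(\lambda; \Pi)$ gives $k(\lambda; \Pi) \le k(\lambda; \Pi')$, and the inclusion $\mathcal{H}(\Pi) \subseteq \mathcal{H}(\Pi')$ follows by the opening remark.
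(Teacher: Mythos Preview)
Your plan has a genuine gap at exactly the point you flag: constraint \eqref{eq:3} does \emph{not} transfer under a direct linear pullback, and no ``algebraic identities'' hidden in the weighted-garbling definition will rescue it. Here is why. First, weighted garbling in this paper acts on the \emph{pair} $(t,y)$, not on $y$ alone: there are weights $\gamma_s^{t',y'}\ge 0$ and stochastic maps $\phi_s(t,y\mid t',y')$ with $p(t,y\mid s,a)=\sum_{t',y'}\gamma_s^{t',y'}\phi_s(t,y\mid t',y')\,p'(t',y'\mid s,a)$. Your formula $\tilde x_s(t,y')=\sum_y q_s(y\mid y')\,x_s(t,y)$ presumes a garbling that fixes $t$, which is not available; under the correct definition your verification of \eqref{eq:1}--\eqref{eq:2} already breaks. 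If instead you use the full pullback $\tilde x_s(t',y')=\gamma_s^{t',y'}\sum_{t,y}\phi_s(t,y\mid t',y')\,x_s(t,y)$, then \eqref{eq:1}--\eqref{eq:2} do transfer, but \eqref{eq:3} fails even in the simplest case $T=\{s\}$, $\xi(s)=s$: you would need $\lambda\cdot \tilde x_s(s,y')\le 0$, yet the right-hand side averages $\lambda\cdot x_s(t,y)$ over \emph{all} $t$, and constraint \eqref{eq:3} for $\Pi$ gives you no sign information on $\lambda\cdot x_s(t,y)$ when $t\ne s$. Since the weight $\gamma_s^{s,y'}$ can be arbitrarily large, this term can be made arbitrarily positive.

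The missing idea is a decomposition that separates the state-dependent part of the payment from the signal-dependent part. For each $s,t$ set $l_s(t):=x_s(t,y^\ast)$ where $y^\ast\in\arg\max_y \lambda\cdot x_s(t,y)$, and let $z_s(t,y):=x_s(t,y)-l_s(t)$. Then (i) $l_s$ depends only on the next state and already satisfies \eqref{eq:3} for any $T,\xi$ because it is a particular choice of $\psi$ in the original constraint; and (ii) $\lambda\cdot z_s(t,y)\le 0$ termwise. Now define $\tilde z_s(t',y'):=\gamma_s^{t',y'}\sum_{t,y}\phi_s(t,y\mid t',y')\,z_s(t,y)$ and $\tilde x_s(t',y'):=l_s(t')+\tilde z_s(t',y')$. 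Constraint \eqref{eq:3} holds because the $l$-part is a legitimate instance of \eqref{eq:3} for $\Pi$ and the $\tilde z$-part is a nonnegative combination of nonpositive numbers; constraints \eqref{eq:1}--\eqref{eq:2} hold because the expectation of $l_s(\cdot)$ depends only on the (common) state-transition marginal $q(\cdot\mid s,a)$, while the expectation of $\tilde z_s$ reproduces that of $z_s$ by the weighted-garbling identity. This is the step your proposal is missing; without it the argument does not close.
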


Let us outline the idea of the proof of \autoref{thm:score}. We first observe that the payoff increment $x_s (t,y)$ can be decomposed into two components $l_s (t)$ and $z_s (t,y)$. The first component, $l_s (t)$, is defined as the increment that pushes the score most in the direction of $\lambda$ given the transition $(s,t)$, i.e., $\lambda \cdot l_s (t) = \max_{y \in Y} \lambda \cdot x_s (t,y)$. Note that, by construction, this component satisfies condition \eqref{eq:3} in HSTV's program.

On the other hand, the second component, defined by $z_s (t,y):= x_s (t,y) -l_s (t)$, represents an additional variation of increments based on the public signal. Note that the score of each $z_s (t,y)$ is nonpositive by definition.

For a more WG-informative structure $\Pi^\prime$, we keep the same $l_s (t)$ and construct the weighted expectation of $z_s (t,y)$ given each $(t^\prime,y^\prime)$ as $\tilde z_s (t^\prime,y^\prime) = \sum_{(t,y) \in S \times Y} \gamma^{t^\prime,y^\prime}_s \phi_s(t,y|t^\prime,y^\prime) z_s(t,y)$. This application of garbling to these increments is similar to the application of garbling to continuation payoffs for repeated games (e.g., \cite{Kandori_1992_RES}). We then construct the new payoff increment by adding these two components together, i.e., $\tilde x_s (t^\prime,y^\prime)  = l_s (t^\prime) + \tilde z_s (t^\prime,y^\prime)$. They clearly satisfy conditions \eqref{eq:1} and \eqref{eq:2} for $\mathcal{P}(\lambda; \Pi^\prime)$. This is because the state transition rule is independent of monitoring structures, hence the expectation of $l_s(\cdot)$ is the same for $\Pi$ and $\Pi^\prime$, and the expectation of the second term is the same given the above definition based on weighted garbling. Notice that condition \eqref{eq:3} also holds because the score of each $\tilde z_s (t^\prime, y^\prime)$ is nonpositive by construction. 

\begin{proof}
	
	Fix $\lambda \in \Lambda$. Let $v \in \mathbb{R}, \alpha_s \in \prod_{i \in I} \Delta (A_i)  ,x_s : S \times Y \to \mathbb{R}^N$ for each $s \in S$ be any feasible point for HSTV's program $\mathcal{P}(\lambda; \Pi)$.

	For our purpose, it is sufficient to construct $\{ \tilde{x}_s (t',y')\}_{ s, t' \in S, y' \in Y'}$ that satisfies all the constraints, together with the same $\{ \alpha_s\}_s$ and $v$, for the problem $\mathcal{P}(\lambda; \Pi^\prime)$.

	For each $s,t \in S$, let $l_s (t) = x_s (t,y^*)$, where $y^*$ is a solution to $\max_{y \in Y} \lambda \cdot x_s (t,y)$. We decompose $x_s(t,y)$ into $l_s (t)$ and the remaining part $z_s(t,y)$, where $z_s(t,y) := x_s(t,y)- l_s (t)$.
	
	For each $(t',y') \in S \times Y'$, define
	\[
	\tilde{z}_s(t^\prime, y^\prime) : = \sum_{(t, y) \in S \times Y} \gamma^{t^\prime,y^\prime}_s \phi_s(t,y|t^\prime,y^\prime) z_s(t,y)
	\]
	and 
	\[ 
	\tilde{x}_s (t',y'):=l_s (t') + \tilde{z}_s (t',y').
	\]
	Let $T \subseteq S$. Note that for any permutation $\xi: T \to T$,
	\[ 
	\lambda \cdot \sum_{s \in T} l_s (\xi(s)) = \sum_{s \in T} \max_{y \in Y} \lambda \cdot x_s (\xi (s),y) \leq 0,
	\] 
	where the last inequality follows from \eqref{eq:3}. Moreover, since $\lambda \cdot z_s (t,y) \leq 0$ for all $s,t$ and $y$ by construction, and $\gamma_s^{t',y'} \geq 0$,
	$$\lambda \cdot \tilde{z}_s (t',y')=\sum_{(t, y) \in S \times Y} \gamma_s^{t',y'}\phi_s ( t,y | t',y')\lambda \cdot z_s (t,y)\leq 0,\quad \forall s \in S, t' \in S,y' \in Y.$$
	Then,
	$$\lambda \cdot \sum_{s \in T} \tilde{x}_s (\xi(s), \psi (s))= \lambda \cdot \sum_{s \in T} l_s (\xi(s) ) + \lambda \cdot \sum_{s \in T}  \tilde{z}_s (\xi(s),\psi (s))\leq 0$$
	for any permutation $\xi: T \to T$ and any function $\psi: T \to Y'.$
	Thus, we conclude that condition \eqref{eq:3} is satisfied for $\tilde{x}_s (t',y')$ with the monitoring structure $\Pi'$. To show that the constraints corresponding to \eqref{eq:1} and \eqref{eq:2} are satisfied, fix $s \in S$ and observe that for each $a \in A$,
	\begin{align*}
		\sum_{(t',y') \in S \times Y'} \tilde{z}_s(t',y') p' (t',y' | s,a)&= \sum_{(t',y') \in S \times Y'} \sum_{(t,y) \in S \times Y} \gamma^{t^\prime,y^\prime}_s \phi_s(t,y|t^\prime,y^\prime) z_s(t,y) p' (t',y' |s,a)\\
		&= \sum_{(t,y) \in S \times Y} z_s(t,y) \left(\sum_{(t',y') \in S \times Y'}  \gamma^{t^\prime,y^\prime}_s \phi_s(t,y|t^\prime,y^\prime) p' (t',y' |s,a) \right) \\
		&= \sum_{(t,y) \in S \times Y} z_s(t,y) p (t,y |s,a).
	\end{align*}
	Thus, 
	\begin{align*}
		\sum_{(t', y') \in S \times Y'}  \tilde{x}_s (t',y')p' (t',y' | s,a)&= \sum_{ (t', y') \in S \times Y'} l_s (t') p' (t',y' | s,a) + \sum_{(t', y') \in S \times Y'}\tilde{z}_s (t',y')p' (t',y'|s,a) \nonumber \\
		&=\sum_{t \in S} l_s (t) q (t| s,a) + \sum_{(t, y) \in S \times Y} z_s (t,y)p(t,y | s,a) \label{eq:nnn4}\\
		&= \sum_{(t, y) \in S \times Y} x_s (t,y)p (t,y| s,a). \nonumber 
	\end{align*}
\end{proof}

There are two ideas behind this limit monotonicity result with respect to the WG-order, which does not work for a fixed $\delta$ but works as $\delta \rightarrow 1$. The first idea is to use the next period state as a purely informative signal. Since the next state and public signal are jointly informative about the current action, it is natural to compare the informativeness of monitoring structures with respect to the standard Blackwell garbling order for joint distributions (i.e., joint garbling). However, the next period state is not just an informative signal but also determines the set of feasible continuation payoffs. To ignore such state-dependent physical constraints and use the next state as a purely informative signal, we focus on the set of PPE payoffs that can be supported starting at any state, which converges to $\mathcal{H}(\Pi)$ as $\delta \rightarrow 1$ subject to the full dimensionality as shown by \cite{HSTV_2011_ECMA}. Every such payoff profile (in the interior of $\mathcal{H}(\Pi)$) can be supported by creating a state-independent ``block self-generating set'' around it when $\delta$ is large.\footnote{This is a set that is self-generating independent of the initial state with a block of $n-1$ periods. See the proof of Proposition 2 in \cite{HSTV_2011_ECMA} for details.} If $\Pi^\prime$ is more informative than $\Pi$ with respect to the joint garbling order, then we can mimic the construction of continuation payoffs to support each target payoff profile based on joint garbling. The continuation payoffs we construct are different from the original ones in a complex way in general.\footnote{For repeated games, this reduces to a simple convex combination of continuation payoffs in \Cref{subsec:3.1}.} However, when $\delta$ is very large, they behave similarly to the original continuation payoffs and hence still stay in the original block self-generating set. Thus, the block self-generating set for $\Pi$ becomes block self-generating for $\Pi^\prime$ when the players are patient enough. This is why the set of PPE payoffs that can be supported at every state is monotone with respect to the joint garbling order in the limit as $\delta \rightarrow 1$. In addition, this set is exactly the (identical) limit PPE payoff set at every state when \autoref{ass:1} is satisfied.

The second idea is conditional informativeness. With a more WG-informative monitoring structure, players can wait for more informative signals to enable more efficient rewards or punishments. However, the probability of such an event may be small, hence we may not be able to generate large enough payoff variations to provide enough incentives for a given discount factor. However, this problem is resolved as well when players get arbitrarily patient. Note that this argument is valid even in repeated games. Thus, our result complements the monotonicity result with a fixed $\delta$ for repeated games in \cite{Kandori_1992_RES}.

\subsection{Example}

We present the simplest example with no state (i.e., a repeated game) to illustrate how the score improves with improved monitoring. This example also demonstrates that weighted garbling is a new and useful concept even for repeated games.

Suppose that the stage game is a Prisoners' Dilemma, where the payoff is $(1,1)$ for $(C,C)$, $(1+g, -\ell)$ with $g, \ell > 0$ for $(D,C)$ (symmetric for $(C,D)$), and $(0,0)$ for $(D,D)$ (see \autoref{fig:pd}). Assume $1> g-\ell$, so that $(C,C)$ maximizes the joint payoff.
\begin{figure}[]
	\center
	\begin{tabular}{ccc}
		& $C$                             & $D$                             \\ \cline{2-3} 
		\multicolumn{1}{c|}{$C$} & \multicolumn{1}{c|}{$1,1$}    & \multicolumn{1}{c|}{$-\ell,1+g$} \\ \cline{2-3} 
		\multicolumn{1}{c|}{$D$} & \multicolumn{1}{c|}{$1+g,-\ell$} & \multicolumn{1}{c|}{$0,0$}    \\ \cline{2-3} 
	\end{tabular}
	\caption{Payoff matrix for the Prisoners' Dilemma}
	\label{fig:pd}
\end{figure}
We use the monitoring structure $\Pi = (\{c,d\}, f)$ and $\Pi' = (\{c,d,n\}, f')$ from \autoref{exmp:nnn2} in \Cref{sec:3}.

For repeated games, the programming problem for each $\lambda$ reduces to the standard problem in \cite{FL_1994_JET}. For example, if we set $\lambda = (1,1)$ and set the action profile to $(C, C)$, the problem for $\mathcal{P}(\lambda; \Pi^\prime)$ becomes:
\begin{align*}
	&\sup_{v \in \mathbb{R}^{2} \ x: Y^\prime \to \mathbb{R}^{2}} v_1 + v_2 \ \mbox{s.t.} \\
	&v_i = 1 + \sum_{y' \in Y'} x_i(y') p'(y'|CC),\quad i=1,2 \\
	&v_i \geq 1+g + \sum_{y' \in Y'} x_{i} (y') p'(y'|\neg CC),\quad i=1,2 \\
	&x_{1} (y') + x_2(y') \leq 0, \quad \forall y' \in Y'.
\end{align*}

We can show that $\mathcal{H}(\Pi^\prime)$ is strictly larger than $\mathcal{H}(\Pi)$ in this example for any $\epsilon > 0$ (Recall that $p'$ is implicitly parameterized by the probability $\epsilon$ of observing a more informative binary signal for $\Pi^\prime$). This is easy to see when $\epsilon =1$, i.e., $\Pi^\prime$ is strictly more informative than $\Pi$ in the standard Blackwell sense. In this case, clearly $k(\lambda; \Pi^\prime) \geq k(\lambda; \Pi)$ holds for any $\lambda$. Furthermore, it can be easily shown that the inequality holds strictly for any $\lambda \gg 0$.\footnote{For example, for $\lambda = (1,1)$, the maximized score for $\Pi$ and $\Pi^\prime$ would be $2 - (2 \eta g)/(1-2\eta)$ and $2 - (2 \eta' g)/(1-2\eta')$, respectively. The latter is strictly larger since $\eta' < \eta$.} Then note that $\mathcal{H}(\Pi^\prime)$ is actually the same for any $\epsilon > 0$, i.e., when $\Pi^\prime$ is just a weighted garbling of $\Pi$. If $x_i^*$ is a solution to the above score maximization problem when $\epsilon =1$, then for any $\epsilon \in (0,1)$, we can achieve exactly the same score by setting $x_i(y') =(1/\epsilon) x_i^*(y')$ for $y' =c,d$, and $x_i(n) = 0$.

Essentially, we are replicating the continuation payoff variations after $y=c,d$ for $\Pi$ using $y'=c, d$ for $\Pi^\prime$. The probability of observing $y' = c$ or $d$ is $\epsilon$, which may be too small to satisfy the incentive constraints for a given level of discount factor. However, they are not binding constraints when the discount factor becomes sufficiently large.

\subsection{Strict Weighted Garbling for Strict Improvement}

In this subsection, we examine when the score increases strictly. Note that the maximized score in each direction is only informative about how far the PPE payoff set can expand toward that direction \emph{in the limit}. Hence, our result does not provide a clear comparison of two monitoring structures for $\delta$ less than $1$ when the maximized score is the same for both monitoring structures. For example, $k(\lambda, \Pi^\prime) \geq k(\lambda; \Pi )$ holds for \autoref{exmp:nnn2} for any $\lambda$, even if $\eta^\prime = \eta$. However, $\Pi^\prime$ is clearly less informative than $\Pi$ in this knife-edge case since $\Pi^\prime$ is just a convex combination of $\Pi$ and no information. Hence the PPE payoff set for $\Pi^\prime$ cannot be a strict superset of the PPE payoff set for $\Pi$ for any discount factor $\delta < 1$. What \autoref{thm:mono} says is just that the PPE payoff set for $\Pi^\prime$ must catch up with the PPE payoff set for $\Pi$ at least in the limit. On the other hand, when $\eta^\prime \in (0, \eta)$ for \autoref{exmp:nnn2}, it can be shown that $k(\lambda, \Pi^\prime)$ is strictly larger than $k(\lambda, \Pi)$ for many $\lambda \in \mathbb{R}^2_{++}$ such as $(1,1)$, as noted in the footnote in the previous subsection. So, the PPE payoff set for $\Pi^\prime$ overtakes the PPE payoff set for $\Pi$ in this example when $\delta$ is large but still less than $1$.

For our strict improvement result, we need to introduce a strict version of weighted garbling as follows.   

\begin{defn}[Strict Weighted Garbling]
	A monitoring structure $\Pi=(Y, f)$ is a \emph{strict weighted garbling} of $\Pi' = (Y', f')$ if, for every $s \in S$, there exist nonnegative weights $\gamma^{t^\prime,y^\prime}_{s} \geq 0, \forall (t^\prime,y^\prime)$ and $\phi_{s}: S \times Y^\prime \rightarrow \Delta(S \times Y)$ where the support of $\phi_{s}(\cdot, \cdot |t^\prime,y^\prime)$ in $S \times Y$ is the same for all $(t^\prime,y^\prime) \in S \times Y'$ and, for each $a \in A$,
	\[
	p(t,y|s,a)= \sum_{(t^\prime,y^\prime) \in S \times Y^\prime}\gamma^{t^\prime,y^\prime}_{s} \phi_{s}(t,y|t^\prime,y^\prime)p'(t^\prime,y^\prime|s,a) , \quad \forall (t,y) \in S \times Y.
	\]
\end{defn}

Intuitively, $\Pi$ is a strict weighted garbling of $\Pi^\prime$ when $\Pi$ is a weighted garbling of $\Pi^\prime$ and $\phi$ represents a noise with ``common support'' at each state.\footnote{Recall that when $\Pi$ is a weighted garbling of $\Pi^\prime$ with weight $\gamma$ and $\phi$, $p^\gamma$ is more informative than $p$ in the standard Blackwell sense, where $p^\gamma(t^\prime, y^\prime|s,a) = \gamma_{s}^{t^\prime, y^\prime} p'(t^\prime, y^\prime|s,a)$. Strict weighted garbling corresponds to $p^\gamma$ being more informative than $p$ in strong sense. In fact, this is more than $p^\gamma$ being strictly more Blackwell informative than $p$, as it requires $\phi_s(t, y|t', y')$ to be positive for every $(t,y)$ in $D_s$ and every $(t'y')$.} In \autoref{exmp:nnn2}, $\Pi^\prime$ with $\eta^\prime \in (0, \eta)$ is indeed strictly WG-more informative than $\Pi$, since we can set $\gamma^{c} = \gamma^{d} = 1/\epsilon, \gamma^{n} = 0$, and $\phi(c|c) = \phi(d|d) = (1-\eta -\eta^\prime)/(1-2\eta^\prime) \in (0,1)$.\footnote{$\phi(\cdot|n)$ can be any arbitrary full-support distribution on $\{c,d\}$.}

To improve the score strictly, there must be some room for improvement. For example, for repeated games, no better monitoring structure would be able to expand the limit PPE payoff set if the folk theorem already holds. If we like to improve the score in direction $\lambda$, there must be some score-burning after some realization of the public signal. That is, there must exist some $y \in Y$ such that $p(y | \alpha) > 0$ and $\sum_{i \in I} \lambda_i x_i(y) < 0$ for solution $\alpha$ and $x(\cdot)$ for $\mathcal{P}(\lambda)$. This simply means that the maximum score in direction $\lambda$ requires costly punishment with respect to that direction.

We extend this condition to stochastic games. We say that the public signal is \emph{essential} in direction $\lambda \in \Lambda$ for $\Pi$ if there exists a solution $(\alpha_s, x_s), s \in S$ for the problem $\mathcal{P}(\lambda; \Pi)$ such that, for every $T \subseteq S$ and any permutation $\xi$ on $T$ for which \eqref{eq:3} is binding (i.e., $\lambda \cdot \sum_{s \in T}  \ell_s(\xi(s)) = 0$), there exist $\hat s \in T$, $\hat t \in S$, and $\hat y \in Y$ such that $p(\hat t, \hat y |\hat s, \alpha_{\hat s}) > 0$ and
$\lambda \cdot \ell_{\hat s}(\hat t) = \max_{y \in Y} \lambda \cdot  x_{\hat s}(\hat t, y) > \lambda \cdot x_{\hat s}(\hat t, \hat y)$.
This means that whenever the feasibility constraint \eqref{eq:3} is binding for some cycle of states at the optimal solution, there is some state in the cycle where there is some score-burning after some realization of $(t,y)$.\footnote{Although this is a condition on endogenous variables, it is easy to verify in some special cases. For example, see the partnership application in \Cref{sec:5}, where the state transition is action-independent.} This condition generalizes the above score-burning condition for repeated games. While slightly more complex, it similarly requires that there is a positive chance of costly punishment based on the public signal when the continuation payoffs cycle around the ``efficient'' frontier.

We show that a strictly WG-more informative monitoring structure $\Pi^\prime$ achieves a strictly higher score in direction $\lambda$ when the public signal is essential in direction $\lambda$ for $\Pi$.
\begin{thm}
	\label{thm:strict}
	Suppose $\Pi$ is a strict weighted garbling of $\Pi^\prime =(Y^\prime, f^\prime)$. If the public signal is essential in direction $\lambda \in \Lambda$ for $\Pi$, then $k(\lambda; \Pi^\prime) > k(\lambda; \Pi)$. 
\end{thm}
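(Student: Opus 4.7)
My plan is to start with an optimal triple $(v^{*},\alpha^{*},x^{*})$ for $\mathcal{P}(\lambda;\Pi)$, apply the construction in the proof of \autoref{thm:score} to obtain a feasible triple $(v^{*},\alpha^{*},\tilde x)$ for $\mathcal{P}(\lambda;\Pi')$ already attaining $\lambda\cdot v^{*}=k(\lambda;\Pi)$, and then perturb this triple slightly in the $\lambda$-direction to raise the score strictly.

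The first step is to combine strict weighted garbling with the essential signal condition in order to locate strict slack under $\tilde x$. Strict garbling supplies a common support $\mathcal{S}_{s}\subseteq S\times Y$ on which each $\phi_{s}(\cdot,\cdot\mid t',y')$ is supported, independently of $(t',y')$. Whenever $p(\hat t,\hat y\mid s,a)>0$ for some $a$, the garbling identity forces $\phi_{s}(\hat t,\hat y\mid t',y')>0$ for some $(t',y')$ and hence, by common support, for every $(t',y')$, so $(\hat t,\hat y)\in\mathcal{S}_{s}$. Combined with the essential signal condition, this yields the key structural observation: whenever $(T,\xi,\psi)$ makes \eqref{eq:3} bind under $\tilde x$, the state $\hat s\in T$ supplied by the essential signal condition must satisfy $\gamma_{\hat s}^{\xi(\hat s),\psi(\hat s)}=0$, since otherwise $(\hat t,\hat y)\in\mathcal{S}_{\hat s}$ together with $\lambda\cdot z_{\hat s}(\hat t,\hat y)<0$ would give $\lambda\cdot\tilde z_{\hat s}(\xi(\hat s),\psi(\hat s))<0$, contradicting the vanishing of the $\tilde z$-part at any binding configuration.

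The second step is the perturbation itself. Summing the garbling identity against $1$ in $(t,y)$ yields $\sum_{(t',y')}\gamma_{s}^{t',y'}p'(t',y'\mid s,a)=1$ for every $s$ and every $a\in A$. I therefore propose $\hat x_{s}(t',y')=\tilde x_{s}(t',y')+\theta\gamma_{s}^{t',y'}r$ and $\hat v=v^{*}+\theta r$, with $\theta>0$ small and a direction $r$ satisfying $\lambda\cdot r>0$. Because $\gamma$ integrates to $1$ against $p'(\cdot\mid s,a)$ for every $a$, constraints \eqref{eq:1} and \eqref{eq:2} are preserved identically. For \eqref{eq:3}, non-binding configurations absorb the extra term $\theta(\lambda\cdot r)\sum_{s\in T}\gamma_{s}^{\xi(s),\psi(s)}$ once $\theta$ is smaller than the minimal slack, while at every binding configuration the $\hat s$-summand vanishes by the first step.

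The main obstacle I foresee is the residual possibility that some state $s\neq\hat s$ in a binding $T$ has $\gamma_{s}^{\xi(s),\psi(s)}>0$ while $\lambda\cdot z_{s}$ happens to vanish identically on $\mathcal{S}_{s}$; such an $s$ still contributes $\gamma_{s}^{\xi(s),\psi(s)}>0$ to the perturbation term in \eqref{eq:3} without being pinned down by the essential signal condition. To close this gap, I would refine the perturbation by subtracting, at the offending $(t',y')$, a vector that averages to $0$ against $p'(\cdot\mid s,a)$ for every $a$—so that \eqref{eq:1} and \eqref{eq:2} remain untouched—while carrying enough negative $\lambda$-score to offset the excess in \eqref{eq:3}; the common-support property of $\phi_{s}$ is precisely what makes such compensating vectors available. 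Because the set of binding $(T,\xi,\psi)$ is finite and the essential signal condition applies uniformly across all of them, taking $\theta$ small after this refinement delivers $\lambda\cdot\hat v>\lambda\cdot v^{*}$, whence $k(\lambda;\Pi')\geq\lambda\cdot\hat v>k(\lambda;\Pi)$.
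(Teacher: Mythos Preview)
Your overall architecture matches the paper's: transport a solution of $\mathcal{P}(\lambda;\Pi)$ to a feasible point of $\mathcal{P}(\lambda;\Pi')$ via the construction in the proof of \autoref{thm:score}, then perturb by $\theta\,\gamma_{s}^{t',y'}$ times a vector with positive $\lambda$-score (the paper simply takes $r=\lambda$). Your first step---that a configuration $(T,\xi,\psi)$ binding under $\tilde x$ forces $\gamma_{\hat s}^{\xi(\hat s),\psi(\hat s)}=0$---is correct and is also present in the paper's argument.

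The gap is exactly where you flag it. At a binding $(T,\xi,\psi)$ the perturbation adds $\theta(\lambda\cdot r)\sum_{s\in T}\gamma_{s}^{\xi(s),\psi(s)}$ to the left side of \eqref{eq:3}; only the $\hat s$-summand is guaranteed to vanish, while other $s\in T$ may well have $\gamma_{s}^{\xi(s),\psi(s)}>0$ together with $\lambda\cdot z_{s}\equiv 0$ on $\mathcal{S}_{s}$, so the constraint is violated. Your proposed repair---subtracting a field $w_{s}$ on $S\times Y'$ that integrates to zero against every $p'(\cdot\mid s,a)$ yet carries enough negative $\lambda$-score at the offending points---does not go through. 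To avoid spoiling previously non-binding instances of \eqref{eq:3} you effectively need $\lambda\cdot w_{s}\ge 0$ on the relevant signals, and combined with $\mathbb{E}[\lambda\cdot w_{s}\mid s,a]=0$ for every $a$ this forces $\lambda\cdot w_{s}=0$ on the support of each $p'(\cdot\mid s,a)$; there is no reason the offending $(\xi(s),\psi(s))$ lies outside those supports. Your appeal to ``the common-support property of $\phi_{s}$'' is misplaced: that property constrains the support of $\phi_{s}(\cdot,\cdot\mid t',y')$ inside $S\times Y$, and says nothing about the structure of $p'$ on $S\times Y'$ needed to build such a $w_{s}$.

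The paper sidesteps the obstacle one step earlier rather than patching it afterward. Instead of letting $\tilde z_{s}(t',y')$ default to $0$ when $\gamma_{s}^{t',y'}=0$ (as the \autoref{thm:score} construction gives), the paper \emph{redefines} $\tilde z_{s}(t',y'):=z$ at all zero-weight $(t',y')$, for a fixed vector $z$ with $\lambda\cdot z<0$. Then for the essential state $\hat s$ one has $\lambda\cdot\tilde z_{\hat s}(t',y')<0$ for \emph{every} $(t',y')$: at positive-weight signals by your common-support argument, and at zero-weight signals by fiat. Consequently \eqref{eq:3} is strictly slack for every $(T,\xi,\psi)$, and the uniform perturbation $\hat x_{s}(t',y')=\tilde x_{s}(t',y')+\epsilon\,\gamma_{s}^{t',y'}\lambda$ preserves all three constraints for small $\epsilon$ without any residual case analysis.
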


\begin{proof}
	
	Since $\Pi$ is a strict weighted garbling of $\Pi^\prime$, for each $s \in S$, there exist nonnegative weights $\gamma_s^{t',y'}$ and $\phi_s: S \times Y^\prime \rightarrow \Delta (S \times Y)$ that have the same support in $S \times Y$ given each $(t' , y')$, which we denote by $D_s$, such that, for each $a \in A$, $p(t,y|s,a) = \sum_{(t', y') \in S \times Y^\prime} \gamma_{s}^{t', y'} \phi_{s}(t, y| t', y') p'(t',y'|s,a)$ for all $(t,y) \in S \times Y$.

	Let $(\alpha_s, x_s), s \in S$, be a solution to the problem $\mathcal{P}(\lambda;\Pi)$. Following the proof of \autoref{thm:score}, decompose $x_s(t,y)$ as $x_s(t,y) = \ell_s(t) + z_s(t,y)$, where $\lambda \cdot z_s(t, y) \leq 0, \forall (t,y) \in S \times Y$. For each $s \in S$, define $\tilde z_s$ by $\tilde z_s(t', y') := \sum_{(t,y) \in S \times Y} \gamma_{s}^{t', y'} \phi_{s}(t, y| t', y') z_s(t,y)$ for any $(t', y')$ with strictly positive weight $\gamma_{s}^{t', y'} > 0$ and $\tilde z_s(t', y') = z$ for any $(t', y')$ with $\gamma_{s}^{t', y'} = 0$, where $z \in \mathbb{R}^N$ is any vector such that $\lambda \cdot z < 0$. Then, define $\tilde x_s(t', y') =\ell_s(t') + \tilde z_s(t',y')$ for each $(t', y') \in S \times Y^\prime$. Then, $(\alpha_s, \tilde x_s), s \in S$ is feasible and achieves $k(\lambda; \Pi)$ for the problem $\mathcal{P}(\lambda;\Pi')$, since all the incentive constraints \eqref{eq:2} are preserved and the feasibility constraints \eqref{eq:3} are satisfied for any permutation and mapping, as shown in the proof of \autoref{thm:score}.

	In fact, it can be shown that the feasibility constraint \eqref{eq:3} is slack. Since the public signal is essential in direction $\lambda$ for $\Pi$, for any $T \subseteq S$ and any permutation $\xi$ on $T$ such that \eqref{eq:3} is binding, there exists some $\hat s \in T$ and $(\hat t, \hat y) \in S \times Y$ such that $p(\hat t, \hat y|\hat s, \alpha_{\hat s} ) > 0$ and $\lambda \cdot z_{\hat s}(\hat t, \hat y) < 0$. By the definition of strict weighted garbling, the support of $p(\cdot, \cdot|\hat s, \alpha_{\hat s} )$ is the same as the support of $\phi_{\hat s}(\cdot, \cdot|t', y')$ for each $(t', y')$. So $(\hat t, \hat y) $ is in $D_{\hat s}$, hence $\lambda \cdot \tilde z_{\hat s}(t',y')$ is strictly negative for any $(t', y') \in S \times Y^\prime$ with strictly positive weight $\gamma_s^{t', y'} > 0$. This means that $k(\lambda; \Pi)$ can be achieved for $\mathcal{P}(\lambda; \Pi^\prime)$ while \eqref{eq:3} is slack for any $T \subset S$, any permutation $\xi$ on $T$, and any mapping $\psi:  T \rightarrow Y^\prime$.
	
	Now, modify $\tilde x$ as follows: $\hat x_s(t', y') = \tilde x_s(t', y') + \epsilon \gamma_s^{t', y'} \lambda$ for some $\epsilon > 0$. Note that $(\alpha_s, \hat x_s), s \in S$ is feasible for $\mathcal{P}(\lambda; \Pi^\prime)$ for small enough $\epsilon$, since condition \eqref{eq:3} is slack for $\tilde x$, and \eqref{eq:2} is still satisfied for $\hat x$ (as $\sum_{(t', y') \in S \times Y^\prime}\gamma_s^{t', y'} p'(t', y'| s, a) =1$ for any $(s, a)$).
	
	Moreover, $(\alpha_s, \hat x_s), s \in S$ achieves a strictly higher value than $k(\lambda;\Pi)$, as $\mathbb{E}[\lambda \cdot \hat x_s(\cdot, \cdot)|s, \alpha_s] = \mathbb{E}[\lambda \cdot \tilde x_s(\cdot, \cdot)|s, \alpha_s] + \epsilon \left\|\lambda \right\|^2 > \mathbb{E}[\lambda \cdot \tilde x_s(\cdot, \cdot)|s, \alpha_s] $. Therefore, $k(\lambda;\Pi^\prime) > k(\lambda;\Pi)$. 
\end{proof}

We note that a strict improvement of the maximum score does not necessarily imply a strictly larger limit PPE payoff set. Recall that $\mathcal{H}(\Pi)$ is the intersection of all the half spaces $\{v| \lambda \cdot v \leq k(\lambda; \Pi)\}$ across all directions. Thus, for example, if the hyperplane $\{v| \lambda \cdot v = k(\lambda; \Pi)\}$ does not touch $\mathcal{H}(\Pi)$, then improving the score strictly in direction $\lambda$ does not help $\mathcal{H}(\Pi)$ expand. When $\mathcal{H}(\Pi)$ is a polyhedron, it expands if and only if at least one of its faces moves outward strictly, i.e., $k(\lambda;\Pi^\prime) > k(\lambda;\Pi)$ for some $\lambda$ for which the hyperplane $\{v| \lambda \cdot v = k(\lambda; \Pi)\}$ overlaps with a face of $\mathcal{H}(\Pi)$.

We also note that a strict expansion is relatively easier to verify for the limit \textit{strongly symmetric equilibrium (SSE)} payoff set. In the next section, we show that the limit SSE payoff set is characterized by the maximum scores in only two directions: $\lambda =(+1,\dots,+1)$ and $\lambda =(-1,\dots,-1)$. If the maximum score improves weakly in these two directions and strictly in at least one of them, then the limit SSE payoff set would become strictly larger. 

\section{Monotonicity of the Limit SSE Payoff Set}
\label{sec:5}

\subsection{Strongly Symmetric Equilibrium for Symmetric Stochastic Games}

We extend our results to an important class of PPE for symmetric games: \emph{strongly symmetric equilibrium (SSE)}. There are several reasons for pursuing this extension. First, SSE is a natural refinement of PPE in symmetric environments. Second, applications often focus on SSE, as the SSE payoff set can be characterized and computed more easily. In fact, we show that the set of SSE payoffs is characterized by two programming problems, rather than an infinite number of programming problems in all directions of $\lambda \in \Lambda$.

We note that our WG-order and monotonicity result is particularly useful for SSE. In symmetric games, the first-best symmetric payoff is rarely achievable in SSE. Since every player is treated symmetrically, the incentive for cooperation needs to be provided through costly group punishment in SSE, which leads to a failure of the folk theorem. This leaves a room for an improved monitoring structure to reduce the incentive cost and enhance efficiency.

We introduce symmetric stochastic games. A stochastic game is \emph{symmetric} if it satisfies the following conditions: 1) symmetric action sets, i.e., $A_i = B$ for some finite set $B$ for every $i \in I$, 2) symmetric payoff functions, i.e., $u_{i}(a_{\xi(1)},\dots,a_{\xi(N)}, s) = u_{\xi(i)}(a,s) $ for any $(a, s) \in B^N \times S$ and any permutation $\xi$ on $I$, 3) symmetric transition law, i.e., $q(\cdot|s, a)= q(\cdot |s, a')$ for any $(s, a)$ and $a'$ that is a permutation of $a$, and 4) symmetric monitoring structure, i.e., $f(\cdot|t,s,a)= f(\cdot |t,s,a')$ for any $(t, s), a$ and $a'$ that is a permutation of $a$. Note that conditions 3) and 4) imply $p (\cdot |s,a) = p (\cdot |s,a')$ for such $a$ and $a'$ for each $s$. From now on, we drop the subscript $i$ and use $u(a, s)$ to denote each player's payoff when every player plays $a \in B$ at state $s$, and $u(a^\prime/a, s)$ to denote the payoff of playing $a^\prime \in B$ when every other player plays $a$ at state $s$. We also interpret distributions such as $q(\cdot|s, a)$ as the distribution when every player plays $a$, and $q(\cdot|s, a^\prime/a)$ as the distribution when one player unilaterally deviates to $a^\prime$ while every other player plays $a \in B$.

A strategy profile is a strongly symmetric equilibrium (SSE) if it is a PPE and, at any history, all players play the same mixed action in equilibrium. Denote the set of SSE payoffs at state $s$ with monitoring structure $\Pi$ by $E^{\delta}_{SS}(s; \Pi)$.

We pay special attention to pure strategy strongly symmetric equilibrium (PSSE), as we often assume pure strategies for applications (we provide one such application in \Cref{subsec:partner}). The set of PSSE payoffs is denoted by $E^{\delta}_{PSS}(s; \Pi)$.

\subsection{$P$-weighted Garbling for Pure Strategy Strongly Symmetric Equilibrium}

For pure strongly symmetric strategies, the quality of a monitoring structure is determined by its informativeness regarding unilateral deviations from symmetric action profiles. Thus, it is natural to introduce a weaker criterion of informativeness when we restrict our attention to PSSE. For any $a \in A$, let $A(a) \subseteq A$ be the set of action profiles consisting of $a$ itself and the action profiles that would arise from all possible unilateral deviations from $a$. We define \emph{$P$-weighted garbling} with respect to a subset of action profiles $A' \subseteq A$ as follows.

\begin{defn}[$P$-Weighted Garbling with respect to $A' \subseteq A$]
	A monitoring structure $\Pi=(Y, f)$ is a \emph{P-weighted garbling} of $\Pi' = (Y', f')$ with respect to $A' \subseteq A$ if, for every $s \in S$ and $a \in A'$, there exist nonnegative weights $\gamma^{t^\prime,y^\prime}_{s,a} \geq 0, \ \forall (t^\prime,y^\prime)$ and $\phi_{s,a}: S \times Y^\prime \rightarrow \Delta(S \times Y)$ such that, for each $\hat a \in A(a)$,
	\[
	p(t,y|s, \hat a)= \sum_{(t^\prime,y^\prime) \in S \times Y^\prime}\gamma^{t^\prime,y^\prime}_{s,a} \phi_{s,a}(t,y|t^\prime,y^\prime)p'(t^\prime,y^\prime|s, \hat a) , \quad \forall (t,y) \in S \times Y.
	\]
	$\Pi$ is a \emph{strict $P$-weighted garbling} of $\Pi'$ with respect to $A' \subseteq A$ if, in addition, $\phi_{s,a}(\cdot, \cdot|t^\prime,y^\prime)$ has the same support $D_{s,a} \subseteq S \times Y$ for every $(t^\prime,y^\prime) \in S \times Y'$ for each $(s, a) \in S \times A'$.	
\end{defn}

This is weaker than weighted garbling, as we restrict attention to two types of subsets of actions: $A'$ and $A(a)$ for each $a \in A'$. $A'$ is a subset of action profiles that we care about for our analysis. For PSSE, this corresponds to the set of all symmetric pure action profiles. $A(a)$ is the set of action profiles that could arise after a unilateral deviation when $a \in A'$ is supposed to be played. Note that, for $\Pi^\prime$ to be more informative than $\Pi$ in terms of PSSE, it only needs to be more informative with respect to $A(a)$ for each symmetric action profile $a$. Also note that weight $\gamma^{t^\prime,y^\prime}_{s,a}$ and $\phi_{s,a}$ depend on $a \in A'$, in addition to the current state $s$. Intuitively, the target action profile serves as an additional state, representing the range of action profiles that are relevant to our analysis.

\subsection{Results for SSE and PSSE}

To extend our result to SSE, we need to extend HSTV's limit characterization to SSE. Note that HSTV's results do not directly apply to the limit SSE payoff set, as the full-dimensionality assumption does not hold for the SSE payoff set.

We show that to characterize the limit SSE payoff set, it suffices to solve the following programming problem $\mathcal{P}_{SS}(\lambda; \Pi)$ for $\lambda = -1$ and $+1$: 
$$ k_{SS}(\lambda; \Pi) =\sup_{v \in \mathbb{R}, \ \alpha_s \in \Delta (B) \ \forall s, \ x_s (t,y) \in \mathbb{R} \ \forall s,t,y } 
\lambda v $$
subject to 
\begin{align*}
	&v= u(\alpha_s, s)   + \sum_{(t, y) \in S \times Y} x_{s} (t,y) p(t,y | s, \alpha_s), \quad \forall s \in S \\
	&v \geq u(a^\prime/\alpha_s, s) + \sum_{(t, y) \in S \times Y} x_{s} (t,y) p (t,y| s, a^\prime/\alpha_s), \quad \forall s \in S, a' \in B\\
	&\lambda \sum_{ s \in T} x_s(\xi (s) ,\psi (s))\leq 0
\end{align*}
for any $T \subseteq S$, any permutation $\xi: T \to T$ and any function $\psi: T \to Y$. Note that $v$ and $x_s(t, y)$ are just numbers this time, because all players receive the same payoff. When we restrict attention to pure strongly symmetric strategies, we replace $\Delta (B)$ with $B$ and refer to this modified problem by $\mathcal{P}_{PSS}(\lambda; \Pi)$. The maximized score for $\mathcal{P}_{PSS}(\lambda; \Pi)$ is denoted by $k_{PSS}(\lambda; \Pi)$.

The following proposition provides an extension of HSTV's limit characterization result to SSE and PSSE.

\begin{prop}
	\label{prop:sn2}
	The following hold:
	\begin{enumerate}
		\item For any $\delta \in [0,1)$, $$\bigcap_{ s \in S} E^\delta_{SS} (s; \Pi)  \subseteq \left[-k_{SS}(-1; \Pi), k_{SS}(+1; \Pi)\right].\footnote{If $k_{SS}(+1; \Pi) < -k_{SS}(-1; \Pi)$, then the intersection of the equilibrium payoff sets would be empty.}$$
		If $-k_{SS}(-1; \Pi) < k_{SS}(+1; \Pi)$, then $\lim_{\delta \to 1} \bigcap_{  s \in S} E^\delta_{SS}(s; \Pi) = \left[-k_{SS}(-1; \Pi), k_{SS}(+1; \Pi)\right].$
		
		\item For any $\delta \in [0,1)$, $$\bigcap_{ s \in S} E^\delta_{PSS} (s; \Pi)  \subseteq \left[-k_{PSS}(-1; \Pi), k_{PSS}(+1; \Pi)\right].$$ If $-k_{PSS}(-1; \Pi) < k_{PSS}(+1; \Pi)$, then $\lim_{\delta \to 1} \bigcap_{  s \in S} E^\delta_{PSS}(s; \Pi) = \left[-k_{PSS}(-1; \Pi), k_{PSS}(+1; \Pi)\right].$ 
	\end{enumerate}
	
\end{prop}

\begin{proof}
	See \Cref{proof:sn2}.
\end{proof}

Using this result, we can extend our main monotonicity results (\autoref{thm:mono}, \autoref{thm:score}, and \autoref{thm:strict}) to SSE and PSSE immediately. To state the next results, we use a variation of \autoref{ass:1} for the limit SSE and PSSE payoff set. Assumption SS (PSS) is satisfied if the distance between the SSE (PSSE) payoff sets at any two states converges to $0$ as $\delta \rightarrow 1$.\footnote{These assumptions are satisfied, for example, when the Markov chain given any symmetric pure Markov strategy is irreducible (or has a unique invariant distribution more generally). Note that they are weaker than the corresponding sufficient condition for the original \autoref{ass:1}, as we restrict attention to strongly symmetric strategies.} We omit the proof of the following proposition as it is straightforward. 

\begin{prop}
	\label{prop:sn3}
	The following hold:
	\begin{enumerate}
		\item Suppose that $-k_{SS}(-1; \Pi) < k_{SS}(+1; \Pi)$ and $\Pi$ is a weighted garbling of $\Pi'$. Then, $\lim_{ \delta \to 1}\bigcap_{ s \in S} E^\delta_{SS} (s; \Pi) \subseteq \lim_{ \delta \to 1} \bigcap_{ s \in S} E^\delta_{SS} (s; \Pi')$. In particular, if Assumption SS is satisfied for $\Pi$ and $\Pi^\prime$, then $\lim_{ \delta \to 1} E^\delta_{SS} (s; \Pi) \subseteq \lim_{ \delta \to 1} E^\delta_{SS} (s'; \Pi')$ for any $s, s' \in S$.
		If, in addition, $\Pi$ is a strict weighted garbling of $\Pi'$ and the public signal is essential in direction $\lambda = -1$ or $+1$ for $\mathcal{P}_{SS}(\lambda; \Pi)$, then the above inclusions are strict. 
		
		\item Suppose that $-k_{PSS}(-1; \Pi) < k_{PSS}(+1; \Pi)$ and $\Pi$ is a $P$-weighted garbling of $\Pi'$ with respect to the set of symmetric action profiles. Then, $\lim_{ \delta \to 1}\bigcap_{ s \in S} E^\delta_{PSS} (s; \Pi) \subseteq \lim_{ \delta \to 1} \bigcap_{ s \in S} E^\delta_{PSS} (s; \Pi')$. In particular, if Assumption PSS is satisfied for $\Pi$ and $\Pi^\prime$, then $\lim_{ \delta \to 1} E^\delta_{PSS} (s; \Pi) \subseteq \lim_{ \delta \to 1} E^\delta_{PSS} (s'; \Pi')$ for any $s, s' \in S$.
		If, in addition, $\Pi$ is a strict $P$-weighted garbling of $\Pi'$ and the public signal is essential in direction $\lambda = -1$ or $+1$ for $\mathcal{P}_{PSS}(\lambda; \Pi)$, then the above inclusions are strict.

	\end{enumerate}
\end{prop}

\subsection{Application: Partnership Game}
\label{subsec:partner}

We apply the above results to PSSE in a symmetric partnership game. Consider $N \in \mathbb{N}$ agents who simultaneously choose either $e$ (``effort'') or $ne$ (``no effort''). Choosing $ne$ incurs no cost, while $e$ costs $c > 0$ for each player. Since the game is symmetric, we use the total number of efforts $ k \in \{0,\dots,N\}$ instead of the whole effort profile to simplify the exposition in the following. There is state $s \in S$ that affects the productivity of the team and evolves over time. We assume that state transition is action-independent and follows the transition law $q: S \rightarrow \Delta(S)$.

Let $\pi(k,s) \in \mathbb{R}$ represent the expected revenue given $k$ efforts at state $s$. The $N$ agents share the revenue equally, so agent $i$'s expected payoff is given by $u(k,s) -c = \frac{1}{N}\pi(k,s)  -c$ when she is one of the $k$ agents who exert effort, and $u(k,s)$ if she shirks while $k$ other agents exert effort.

In each period, the team observes a public signal $y \in Y$ that is informative about the total effort in the current period, which is distributed according to $f(\cdot|t, s, k) \in \Delta(Y)$. Note that one agent's deviation to shirking is indistinguishable from any other agent's deviation to shirking.

We focus on pure strategy SSE.\footnote{If $u(k,s)$ and $f(\cdot|t,s,k)$ are linear in $k$ at each $s$, it is without loss of generality to restrict attention to pure strategies to obtain the maximum score for $\mathcal{P}(\lambda; \Pi)$ in any direction $\lambda$. See Lemma 4.1 of \cite{FL_1994_JET} for a related result.} Thus, we compare the value of monitoring structures based on $P$-weighted garbling with respect to two symmetric action profiles: full-effort and no-effort. Hence, in the following, we use the total number of effort $0$ and $N$ to represent the no-effort profile and the full-effort profile respectively and denote $A' = \{0, N\}$.

As mentioned, we apply \autoref{thm:strict} to PSSE in this partnership game. As a first step, we show that strict $P$-weighted garbling with respect to $\{0, N\}$ is equivalent to certain likelihood ratio conditions in this setting. For monitoring structure $\Pi$, define the likelihood ratios with respect to $k$ and $k-1$ efforts as follows: $L^\Pi_{s,k}(t,y) := p(t, y|s, k)/p(t, y|s, k-1) = f(y|t,s,k)/f(y|t,s,k-1)$ for each $(t,y) \in S \times Y$ at each $s \in S$. Each likelihood can take an infinite value and is set to be $1$ if it is $\frac{0}{0}$. Let $\overline{L}^\Pi_{s,k} = L^\Pi_{s,k}(\overline{t}_{s,k},\overline{y}_{s,k}) = \max_{(t,y) \in S \times Y} L^\Pi_{s,k}(t,y)$ and $\underline{L}^\Pi_{s,k} = L^\Pi_{s,k}(\underline{t}_{s,k},\underline{y}_{s,k}) = \min_{(t,y) \in S \times Y}L^\Pi_{s,k}(t,y)$. For $P$-weighted garbling with respect to symmetric action profiles, the relevant maximum and minimum likelihood ratios are $\overline{L}^\Pi_{s,k}$ and $\underline{L}^\Pi_{s,k}$ for $k=1,N$. We say that $\Pi = (Y, f)$ is \emph{informative} at symmetric action profiles $k=0, N$ if $\underline{L}^\Pi_{s,k} < \overline{L}^\Pi_{s,k}$ for every $s \in S$ and $k=1, N$. This means that a unilateral deviation from any symmetric action profile must be detectable.

The following lemma shows that $\Pi^\prime$ is informative and more informative than $\Pi$ in the strict $P$-weighted garbling order with respect to $\{0, N\}$ if and only if these likelihood ratios for $k=1,N$ take strictly more extreme values for $\Pi^\prime$.

\begin{lem}
	\label{strictWG}
	
	$\Pi$ is a strict $P$-weighted garbling of $\Pi^\prime$ with respect to $\{0, N\}$ and $\Pi^\prime$ is informative at $\{0,N\}$ if and only if $\left[\underline{L}^\Pi_{s,k}, \overline{L}^\Pi_{s,k} \right] \subset \left(\underline{L}^{\Pi^\prime}_{s,k}, \overline{L}^{\Pi^\prime}_{s,k} \right)$ for every $s \in S$ and $k \in \{1, N\}$.
\end{lem}

\begin{proof}
	See \Cref{proof:strictWG}.
\end{proof}

In this setting, strict $P$-weighted garbling is characterized by a simple condition: a more informative monitoring structure has a larger maximum likelihood ratio and a smaller minimum likelihood ratio under both full effort and no effort. It is useful to compare this property to the corresponding property of Blackwell informativeness. Note that $A(N)$ and $A(0)$ essentially contain only two elements: the symmetric action profile and a unilateral deviation from it. Thus, comparing two monitoring structures at these symmetric action profiles parallels the comparison of two experiments with binary states. In this special case, there is a one-to-one relationship between posterior beliefs and likelihood ratios. Consequently, a larger maximum likelihood ratio and smaller minimum likelihood ratio are equivalent to a larger maximum posterior belief and a smaller minimum posterior belief (given any full-support prior). Recall that an experiment is more Blackwell informative than another if and only if the distribution of posterior beliefs for the former experiment is second-order stochastically dominated by the distribution for the latter. In contrast, $P$-weighted garbling in the symmetric partnership game only requires that the largest and smallest posterior beliefs are more extreme (and strictly so for strict $P$-weighted garbling).

Suppose that the above likelihood ratio conditions are satisfied for $\Pi$ and $\Pi^\prime$. By \autoref{strictWG}, $\Pi$ is a strict $P$-weighted garbling of $\Pi^\prime$ with respect to $\{0, N\}$. By \autoref{prop:sn3}, the best limit PSSE payoff improves strictly for $\Pi'$ when the above likelihood ratio conditions are satisfied and the public signal is essential in direction $\lambda = +1$. Since the state transition is action-independent, the public signal must be essential in direction $\lambda = +1$ when $k_{PSS}(+1, \Pi)$ is achieved with symmetric effort profiles that are not stage-game Nash equilibria. Therefore, we obtain the following proposition. 

\begin{prop} Suppose that (1) $\left[\underline{L}^\Pi_{s,k}, \overline{L}^\Pi_{s,k} \right] \subset \left(\underline{L}^{\Pi^\prime}_{s,k}, \overline{L}^{\Pi^\prime}_{s,k} \right)$ holds for every $s \in S$ and $k \in \{0,N\}$ for $\Pi$ and $\Pi^\prime$ and (2) $k_{PSS}(+1; \Pi)$ is achieved with symmetric effort profiles $\left(k_s\right)_{s \in S}$ where $k_s \in \{0,N\}$ is not a stage-game Nash equilibrium at each state $s \in S$. Then, $k_{PSS}(+1; \Pi^\prime) > k_{PSS}(+1; \Pi)$. In particular, if $k_{PSS}(+1; \Pi) > - k_{PSS}(-1; \Pi)$, then, for any high enough $\delta$, there exists a PSSE for $\Pi^\prime$ that generates a strictly larger payoff than any PSSE for $\Pi$.
\end{prop}

\section{Weighted Garbling with Different Transition Laws}
\label{sec:6}
Thus far, we have fixed a transition law when comparing monitoring structures, as a transition law affects not only the informational environment but also the physical environment of stochastic games. However, comparing two monitoring structures under different transition laws may still be meaningful if those transition laws induce the same limit feasible payoff set at each state under each Markov strategy profile, as our focus is on limit equilibrium payoffs. This is the case, for example, if two transition laws have the same unique invariant distribution under any pure Markov strategy profile given any initial state.\footnote{This is because every extreme point of the feasible payoff set at any state can be generated by some pure Markov strategy profile. Hence, we can generate any feasible payoff by randomizing over pure Markov strategy profiles \cite[]{Dutta_1995_JET}.} 

To fix ideas, consider the following example in which the transition laws are action-independent.

\begin{example}
	\label{exmp:nnnnn1}
	
	Let $ S = \{ s_1, s_2\}$ and consider transition laws $q$ and $q'$ such that $q (s_j |s_j) = \eta$ (resp. $q' (s_j |s_j ) = \eta')$ for each $j =1,2$, where $1> \eta' > \eta \geq 0$. This means that the state is more persistent with $q'$. 
	
	Consider the monitoring structures $\Pi = (Y, f) $ and $\Pi' = (Y, f')$ with $Y = \{ c,d\}$, where $f ( c |s_j,s_k, CC ) =f(d|s_j, s_k, \neg CC) = 2/3$ and $f' ( c |s_j,s_k, CC ) =f'(d|s_j, s_k, \neg CC) = 3/4$ for $\Pi$ and $\Pi'$ respectively if $j \neq k$; otherwise, it is $1/2$ independent of action profiles for both $\Pi$ and $\Pi'$.  Thus, the public signal is informative only if the next-period state differs from the current one. The pair $(q,\Pi)$ (resp. $(q',\Pi')$ ) induces the joint distribution $p$ (resp. $p'$) over $S \times Y$ (see \autoref{fig:nnnnn1}).

	\begin{figure}[]
		\center
		\begin{tabular}{|c|c|c|}
			\hline
			$p (t,y |s_j,a)$ & $CC$                   & $\neg CC$              \\ \hline
			$s_j, c$                           & $\eta \frac{1}{2}$     & $\eta \frac{1}{2}$     \\ \hline
			$s_j, d$                           & $\eta \frac{1}{2}$     & $\eta \frac{1}{2}$     \\ \hline
			$s_k, c$                           & $(1-\eta) \frac{2}{3}$ & $(1-\eta) \frac{1}{3}$ \\ \hline
			$s_k,d$                            & $(1-\eta) \frac{1}{3}$ & $(1-\eta)\frac{2}{3}$  \\ \hline
		\end{tabular}
		\quad 
		\begin{tabular}{|c|c|c|}
			\hline
			$p' (t,y |s_j,a)$ & $CC$                   & $\neg CC$              \\ \hline
			$s_j, c$                           & $\eta' \frac{1}{2}$     & $\eta' \frac{1}{2}$     \\ \hline
			$s_j, d$                           & $\eta' \frac{1}{2}$     & $\eta' \frac{1}{2}$     \\ \hline
			$s_k, c$                           & $(1-\eta') \frac{3}{4}$ & $(1-\eta') \frac{1}{4}$ \\ \hline
			$s_k,d$                            & $(1-\eta') \frac{1}{4}$ & $(1-\eta')\frac{3}{4}$  \\ \hline
		\end{tabular}
		\caption{$j = 1, 2$ and $k \neq j$. The joint distribution $p$ (resp. $p'$) induced by $q$ and $\Pi$ (resp. $q'$ and $\Pi'$) in \autoref{exmp:nnnnn1}.}
		\label{fig:nnnnn1}
	\end{figure}

	Note that since $q$ and $q'$ share the same invariant distribution (which assigns equal probability to $s_1$ and $s_2$), they yield the same limit feasible payoff set for any payoff function. Also note that, although players find the public signal informative more frequently with $q$ than with $q'$ (because $\eta' > \eta$), conditional on the transition to a different state, the public signal is more informative with $q'$ (because $3/4 > 2/3$). Together, the results in the previous sections suggest that the limit equilibrium payoff set would be larger with $(q', \Pi')$. The main result of this section confirms that this is indeed the case.

\end{example}

To state the main result of this section, let us first introduce an extension of weighted garbling to accommodate a comparison of monitoring structures with different transition laws: Let $q$ and $q'$ be transition laws, and $\Pi = (Y,f)$ and $\Pi' = (Y',f')$ be monitoring structures. Let $p$ and $p'$ be the joint distributions induced by $(q, \Pi)$ and $(q', \Pi^\prime)$, respectively. We say that $(q,\Pi)$ is a ``weighted garbling'' of $(q',\Pi')$ if condition \eqref{eq:nnnnn1} in the original definition (\autoref{defn:WG}) is satisfied for $p$ and $p'$. For instance, in \autoref{exmp:nnnnn1}, $(q,\Pi)$ is a ``weighted garbling'' of $(q',\Pi')$ according to this definition, because, conditional on the state transitioning to the opposite state, the public signal is more informative for $\Pi'$.

\begin{thm}
	\label{thm:invariant}
	Let $q$ and $q'$ be transition laws that induce irreducible Markov chains with the same invariant distribution for any pure Markov strategy profile. Suppose $(q,\Pi)$ is a ``weighted garbling'' of $(q',\Pi')$ for monitoring structures $\Pi= (Y,f)$ and $\Pi' = (Y',f')$. Then for each direction $\lambda \in \Lambda$, $k(\lambda; (q',\Pi')) \geq k (\lambda; (q,\Pi))$.
\end{thm}

\begin{proof}
	See \Cref{proof:them4}.
\end{proof}

As in the results of the previous sections, the monotonicity of the scores in each direction leads to an expansion of the limit equilibrium payoff set, given the full-dimensionality assumption and \autoref{ass:1}.

This result implies that the ``persistency'' of the state does not affect the limit equilibrium payoffs.\footnote{In stochastic zero-sum games where only one of the players observes the state, \cite{PT_2017_ECMA} compare transition laws and show that the value of the informed player decreases in a notion of persistency (see also \cite{HRSV_2010_OR}).} Suppose that, given any current state $s$ and any current action profile $a$, the next period state follows a transition law $q(\cdot|s,a)$ with probability $\alpha \in (0,1]$, and remains the same with probability $1-\alpha$. Then, the limit equilibrium payoff set is independent of $\alpha \in (0,1]$ because the invariant distribution for $\alpha q(\tilde s |s, a ) + (1-\alpha) \mathbf{1} (\tilde s =s)$ is independent of $\alpha > 0$.

We note that our previous approach of decomposing payoff increments into the physical parts ($(l_s (t))_{s,t \in S}$ in the proof) and the informational parts does not extend to this case. This approach relies on the property that the expected values of the physical parts are unaffected by different monitoring structures. However, this property no longer holds for two different transition laws, even if they generate the same invariant distribution for each pure Markov strategy profile.\footnote{Action-independent transition laws are exceptions.} For this reason, we adopt an alternative approach of employing the dual problem of HSTV studied by \cite{HTV_2014_GEB} for the proof of this result.

\section{Conclusion}
\label{sec:7}
In this paper, we introduce a novel information order on monitoring structures for stochastic games with imperfect public monitoring based on weighted garbling, which extends Blackwell garbling.

We demonstrate that the limit PPE payoff set is monotone with respect to this information order. More specifically, the value of each score maximization problem increases (strictly) as the monitoring structure becomes (strictly) more WG-informative.

We further establish that the monotonicity of the limit SSE payoff set under a version of the WG-order adapted to symmetric environments. This follows from the observation that the limit SSE payoff set is characterized by a subset of the same score maximization problems.

For future work, it may be fruitful to explore other dynamic game environments in which the limit equilibrium payoff set is characterized by a collection of optimization problems. For example, one promising direction is to examine the value of monitoring structures in relation to the set of Belief-free equilibria \cite[]{EHO_2005_ECMA, Yamamoto_2009_JET} for repeated or stochastic games with private monitoring.

\newpage

\appendix
\section{Appendix: Omitted Proofs}
\label{appendix}
\subsection{Proof of \autoref{prop:nnn1}}
\label{proof:prop1}

\begin{proof}
	
	Fix $s \in S$. By assumption, we can define $T_s: S(s) \rightarrow S(s)$ such that, for each $t \in S(s)$, there exist $\gamma_{st}^{y^\prime} \geq 0$ and $\phi_{st}(\cdot|y^\prime) \in \Delta(Y), \forall y^\prime \in Y^\prime$ that satisfy $f(y|s,t,a) =\sum_{y^\prime \in Y'} \gamma_{st}^{y^\prime} \phi_{st}(y|y^\prime) f^\prime(y^\prime|s, T_s(t), a)$ for all $y \in Y $ for each $a \in A$, i.e., $f^\prime$ conditional on $(s, T_s(t))$ is more WG-informative than $f$ conditional on $(s, t)$.

	Define $\gamma_s^{t^\prime, y^\prime}$ for each $(t^\prime, y^\prime)$ by $\gamma_s^{t^\prime, y^\prime} = (\sum_{\tilde t: T_s(\tilde t) = t^\prime }  \gamma_{s\tilde t}^{y^\prime} q(\tilde t|s))/q(t^\prime|s)$ if $t^\prime \in S(s)$, and $\gamma_s^{t^\prime, y^\prime} = 0$ otherwise. Next, for each $(t^\prime, y^\prime) \in S(s) \times Y^\prime$ such that $T^{-1}_s(t^\prime) \neq \emptyset$, define $\phi_s(t,y|t^\prime, y^\prime)$ by $\phi_s(t, y|t^\prime,y^\prime ) = (\gamma_{st}^{y^\prime} \phi_{st}(y|y^\prime) q(t|s))/(\sum_{\tilde t: T_s(\tilde t) = t^\prime} \gamma_{s\tilde t}^{y^\prime} q(\tilde t|s))$ if $t \in T_s^{-1}(t^\prime)$, and $0$ otherwise. For any other $(t^\prime, y^\prime) \in S \times Y^\prime$, let $\phi_s(\cdot, \cdot|t^\prime, y^\prime )$ be an arbitrary distribution on $S \times Y$. Note that, for each $(t^\prime, y^\prime) \in S(s) \times Y^\prime$ such that $T^{-1}_s(t^\prime) \neq \emptyset$, $\sum_{(t,y) \in S \times Y} \phi_s(t,y|t^\prime, y^\prime) = \sum_{t: T_s(t) = t^\prime} ( \gamma_{st}^{y^\prime} q(t|s)/(\sum_{\tilde t: T_s(\tilde t) = t^\prime} \gamma_{s\tilde t}^{y^\prime} q(\tilde t|s))) =1$.

	Observe that for any $(t, y) \in S(s) \times Y$, 
	\begin{align*}
		&\sum_{(t^\prime , y^\prime) \in S \times Y' } \gamma_s^{t^\prime, y^\prime}\phi_s(t,y|t^\prime, y^\prime) p^\prime(t^\prime, y^\prime|s, a)\\
		&\quad=\sum_{y^\prime \in Y' } \gamma_s^{T_s(t), y^\prime}\phi_s(t,y|T_s(t), y^\prime) p^\prime(T_s(t), y^\prime|s, a) \\
		&\quad = \sum_{y^\prime \in Y'} \frac{\sum_{\tilde t: T_s(\tilde t) = T_s(t)}  \gamma_{s,\tilde t}^{y^\prime} q(\tilde t|s)}{q(T_s(t)|s)} \frac{\gamma_{st}^{y^\prime} \phi_{st}(y|y^\prime) q(t|s)}{\sum_{\tilde t: T_s(\tilde t) = T_s(t)} \gamma_{s\tilde t}^{y^\prime} q(\tilde t|s)} f^\prime(y^\prime|s, T_s(t), a)q(T_s(t)|s) \\
		&\quad = \sum_{y^\prime \in Y'} \gamma_{st}^{y^\prime} \phi_{st}(y|y^\prime) f^\prime(y^\prime|s, T_s(t), a) q(t|s) \\
		&\quad = f(y|s,t,a)q(t|s) \\
		&\quad = p(t, y|s, a),
	\end{align*}
	and for any other $(t, y)$, both $p(t, y|s, a)$ and $\sum_{(t^\prime , y^\prime) \in S \times Y' } \gamma_s^{t^\prime, y^\prime}\phi_s(t,y|t^\prime, y^\prime) p^\prime(t^\prime, y^\prime|s, a)$ are $0$ by definition for every $a \in A$. Therefore, $\Pi$ is a weighted garbling of $\Pi^\prime$.
\end{proof}

\subsection{Proof of \autoref{prop:sn2}}
\label{proof:sn2}

\begin{proof}

	\textbf{(Proof of the first claim in Item 1)} Fix $\delta \in (0,1)$. Take any $v \in \bigcap_{s \in S} E_{SS}^\delta(s) \subset \mathbb{R}$. For each $s \in S$, there exist $v^+ _s \equiv \max E_{SS}^\delta(s)$ (by the compactness of $E_{SS}^\delta(s)$), and an SSE $\sigma^+_s = (\sigma^+_{i,s})_{i \in I}$ such that $v^+ _s = (1-\delta ) u (\alpha_s^+,s ) + \delta \sum_{(t,y) \in S \times Y} w^+_s (t,y) p (t,y | s, \alpha_s^+) $, where $\alpha_s^+ \equiv \sigma_{i,s}^+ (s)$ (note that $\sigma_{i,s}^+ (s)  = \sigma_{j,s}^+ (s)$ for any $i,j\in I$) and $w_s^+ (t,y) \in \mathbb{R}$ is the continuation payoff when the next state is $t$ and the public signal is $y$. Define $x_s (t,y):= (\delta/(1- \delta)) (w_s^+ (t,y) - v_s^+) + v- v_s^+$ for each $(t,y) \in S \times Y$. Observe that:
	\begin{align*}
		&u (\alpha_s^+,s ) + \sum_{(t,y) \in S \times Y} x_s (t,y) p (t,y | s, \alpha_s^+) \\ 
		&\quad =u (\alpha_s^+,s ) + \sum_{(t,y) \in S \times Y} \frac{\delta}{ 1- \delta} (w_s^+ (t,y) - v_s^+)p (t,y | s, \alpha_s^+) + v- v_s^+ \\
		&\quad =v.
	\end{align*}
	Since $\sigma^+_s$ is an SSE, all the incentive constraints for $\mathcal{P}_{SS}(+1; \Pi)$ are satisfied. 
	Observe that, for any $T \subseteq S$, permutation $\xi$ on $T$ and $\psi: T \to Y$,
	\begin{align*}
		\sum_{s \in T } x_s (\xi (s),\psi (s)) &= \sum_{s\in T}\left(   \frac{\delta}{ 1- \delta} (w_s^+ (\xi(s),\psi(s)) - v_s^+) + v- v_s^+ \right)	\\
		& = \frac{ \delta}{ 1- \delta} \sum_{ s \in T} \left(w_s^+ (\xi(s),\psi(s)) - v_{ \xi(s)} ^+ \right) + \sum_{ s\in T} (v- v_s^+) \leq 0.
	\end{align*}
	Hence, $(v,x)$ is feasible for $\mathcal{P}_{SS}(+1; \Pi)$, and therefore $v \leq k_{SS}(+1; \Pi)$. Similarly, we can show that $v \geq -k_{SS}(-1 ; \Pi)$. 
	
	\vspace{5mm}

	\noindent \textbf{(Proof of the second claim in Item 1)}	 Let $\underline{z} \in \mathbb{R}$ and $\bar{z} \in \mathbb{R}$ be such that $-k_{SS}(-1; \Pi) < \underline{z}  < \bar{z}  < k_{SS}(+1; \Pi)$. Let $Z \equiv [\underline{z}, \bar{z}]$. Then, there exist $\epsilon_0 >0$, $v^{ \lambda} \in \mathbb{R}$ and $x^{ \lambda} = (x^\lambda_s)_s \in \mathbb{R}^{|S|^2 \times |Y|}$ for each $\lambda \in \{ -1, +1\}$ such that 1) $(v^{ \lambda} ,x^{ \lambda} )$ is feasible in $\mathcal{P}_{SS} (\lambda; \Pi)$ for each $\lambda \in \{-1, +1\}$, and 2) for any $z \in Z$, $v^{-1} + \epsilon_0 <z <v^{+1}-\epsilon_0$. Let 
	$$\kappa_0 \equiv \max \left \{  \max_{ \lambda \in \{ -1,+1\}, (s, t, y) \in S^2 \times Y } |x_s^\lambda(t, y)|, \max_{ \lambda \in \{ -1,+1\}, z \in Z } |z-v^\lambda|   \right\}.$$

	Take $n \in \mathbb{N}$ such that 
	$$\epsilon_0 (n-1)/2 - 2 \kappa_0 |S|>0.$$

	Let $\bar{\delta}<1$ be large enough so that for any $\delta \geq \bar{\delta}$, 1) $(n/2)^2 (1-\delta) \leq |S|$, and 2) $1-\delta^{n-1} \geq (n-1) (1-\delta)/2$. Given $w: H^n \to \mathbb{R}$, we denote by $\Gamma^n (s,w; \delta)$ the $\delta$-discounted $(n-1)$-stage game with final payoffs $w$ and initial state $s$.

	\begin{lem}
		\label{lem:nn1}
		For any $\lambda \in \{ -1, +1\}$, $z \in Z$, and $\delta \geq \bar{\delta}$, there exist continuation payoffs $w : H^n \to \mathbb{R}$ such that the following conditions hold:
		\begin{enumerate}
			\item For each $s \in S$, $z$ is an SSE payoff of the game $\Gamma^n(s,w ;\delta)$.
			\item For every $h \in H^n$, $\lambda w (h) < \lambda z$. 
		\end{enumerate}
	\end{lem}

	\begin{proof}[Proof of \autoref{lem:nn1}]
		
		Let $\lambda \in \{ -1, +1\}$. Following HSTV, define
		$$w(h^n) :=z +  \frac{ 1- \delta^{n-1}}{ \delta^{n-1}} (z- v^{ \lambda} ) + \frac{ 1- \delta}{ \delta^{n-1}} \sum_{k=1}^{n-1} \delta^{k-1} x^{ \lambda}_{s^k} (s^{k+1}, y^{k}), \quad \forall h^n \in H^n.$$

		Let $(\alpha_s)_s \in (\Delta (B))^{ |S|}$ be a Markov strategy such that $((\alpha_s)_s, v^{\lambda},x^{\lambda})$ is feasible in $\mathcal{P}_{SS} (\lambda; \Pi)$. Consider the strongly symmetric strategy profile in which each player uses $(\alpha_s)_s$. By the construction of $w$, we can show that there is no profitable one-shot deviation, so it is an SSE of $\Gamma^n(s,w ;\delta)$.

		To prove the second item, note that $\lambda \left(z- v^{ \lambda}\right) < - \epsilon_0$. HSTV shows (in their Lemma 5) that $((1-\delta)/\delta^{n-1})\sum_{k=1}^{n-1} \delta^{k-1} \lambda x^{ \lambda}_{s^{k}} (s^{k+1}, y^{k})$ is bounded above by $2 |S| \kappa_0 (1-\delta)/\delta^{n-1}$. Then, the sum of the second term and the third term of $\lambda w (h)$ is bounded above by $-((1- \delta)/ \delta^{n-1}) \left(((1- \delta^{n-1})/(1- \delta))\epsilon_0 - 2 |S| \kappa_0  \right)$, which is negative for our choice of $n$ and $\delta$. Therefore, we have $\lambda w (h) < \lambda z$ for every $h \in H^n$.
	\end{proof}

	For each $z \in Z$ and $\delta$, define $\hat{w}_z^{\delta} :H^n \to \mathbb{R} $ by
	\begin{equation}
		\label{eq:ooo1}
		\hat{w}_z^{\delta} :=\begin{cases}
			w^{+1, \delta}_z, & \text{if $z \in \left[ \frac{\bar{z} + \underline{z}}{2}, \bar{z} \right]$}\\
			w^{-1, \delta}_z , & \text{otherwise}
		\end{cases},
	\end{equation}
	where $w^{\lambda, \delta}_z$ is the continuation payoff in \autoref{lem:nn1} for $\lambda$, $z$ and $\delta$.

	Let $\bar{\bar{\delta}} \in (0,1)$ be large enough so that $((1- \delta^{n-1})/\delta^{n-1})2\kappa_0< (\bar{z}- \underline{z})/2$ for any $\delta \geq \bar{\bar{\delta}}$. Then, observe that for any $\delta \geq \max \{ \bar{\delta}, \bar{\bar{\delta}} \}$, $\hat{w}_z^\delta$ satisfies Item 1 and 2 of \autoref{lem:nn1}, and, in addition, $\hat{w}_z^\delta(h^n) \in Z$ for any $h^n \in H^n$, because 
	\begin{equation*}
		|w_z^{\lambda, \delta} (h^n) - z| \leq  \frac{1- \delta^{n-1}}{\delta^{n-1}}|z-v| + \frac{ 1- \delta}{ \delta^{n-1}} \sum_{k=1}^{n-1} \delta^{k-1} |x_{s^{k}}^\lambda (s^{k+1},y^k)|
		\leq \frac{1- \delta^{n-1}}{\delta^{n-1}}2\kappa_0   <\frac{ \bar{z}- \underline{z}}{2}
	\end{equation*}
	(for instance, if $z \in \left[ (\bar{z}+ \underline{z})/2, \bar{z} \right]$, then $\hat{w}_z^\delta(h^n) < z$ and $|\hat{w}_z^\delta(h^n)-z| < (\bar{z}-\underline{z})/2$ so $\hat{w}_z^\delta (h^n) \in Z$ for any $h^n \in H^n$).
	\vspace{2mm}

	Now consider the original stochastic game. Fix $\delta \geq  \max\{  \bar{\delta}, \bar{ \bar{\delta}}\}$. Take any $z \in Z$. We define, inductively, a strongly symmetric strategy profile $\sigma: H \to (\Delta (B))^N$ and continuation payoffs $w: \bigcup_{k=1}^\infty H^{(n-1)(k-1)+1} \to Z$ that achieve $z$ as follows:

	Let $w(h)  = z$ for any $h= (s^1) \in H^1$.

	For any $k \in \{ 1, 2, \dots \}$ and history $h \in H^{(n-1)(k-1)+1}$ (i.e., the history at the initial stage of the $k$-th ``block''), given that $w(h) \in Z$, let $\sigma$ prescribe the same action as the Markov strategy profile in the proof of \autoref{lem:nn1} for $\Gamma^n (s^{ (n-1)(k-1)+1}, \hat{w}_{w(h)}^\delta;\delta)$, where $\hat{w}_{w(h)}^\delta$ is defined as in \eqref{eq:ooo1}, during this block.

	For any history $\tilde{h} \in H^{(n-1)k+1}$ whose restriction to the first $(n-1)(k-1)+1$ periods is equal to $h$, let $w(\tilde{h}) = \hat{w}_{w(h)} (n \tilde{h})$, where $n\tilde{h}$ is the restriction of $\tilde{h}$ to the last $n$ periods. Then, by our choice of $\delta$, $w(\tilde{h}) \in Z$.

	By the one-shot deviation principle and Item 1 in \autoref{lem:nn1}, the strategy profile $\sigma$ is an SSE of the entire (infinite-horizon) game, with an average discounted payoff of $z$. Thus, $Z \subseteq E^\delta_{SS}(s)$ for any $s \in S$ and any $\delta \geq \max\{ \bar{\delta}, \bar{\bar{\delta}}\}$.

	We note that the same proof applies when we restrict our attention to the pure strategy SSE payoff sets. We can simply replace $\alpha_s \in \Delta(B), s \in S,$ with a pure strongly symmetric Markov strategy $a_s \in B, s \in S,$ in the above proof.
\end{proof}

\subsection{Proof of \autoref{strictWG}}
\label{proof:strictWG}

\begin{proof}
	Suppose that $\left[\underline{L}^\Pi_{s,k}, \overline{L}^\Pi_{s,k} \right] \subset \left(\underline{L}^{\Pi^\prime}_{s,k}, \overline{L}^{\Pi^\prime}_{s, k} \right)$ for every $s \in S$ and $k \in \{1,N\}$.

	Fix $s \in S$. First, we consider the full effort action profile. Since $\underline{L}^{\Pi^{\prime}}_{s,N} < 1 <\overline{L}^{\Pi^{\prime}}_{s,N}$ holds, $\Pi^\prime$ is informative at $N$. This implies that $(1, 1)$ lies in the interior of the positive cone spanned by $(p^\prime(\overline{t}^\prime_{s,N}, \overline{y}^\prime_{s,N}|s, N), p^\prime(\overline{t}^\prime_{s,N}, \overline{y}^\prime_{s,N}|s, N-1))$ and $(p^\prime(\underline{t}^\prime_{s,N}, \underline{y}^\prime_{s,N}|s, N), p^\prime(\underline{t}^\prime_{s,N}, \underline{y}^\prime_{s,N}|s, N-1))$ in $\mathbb{R}^2_+$. Hence, there exist unique $\overline{\gamma}_{s, N} > 0$ and $\underline{\gamma}_{s, N} > 0$ that satisfy the following equation for $k = N, N-1$. 
	\[
	\overline{\gamma}_{s, N} p^\prime(\overline{t}^\prime_{s,N}, \overline{y}^\prime_{s,N}|s, k) + \underline{\gamma}_{s, N} p^\prime(\underline{t}^\prime_{s,N}, \underline{y}^\prime_{s,N}|s, k)= 1.
	\]

	Consider the following auxiliary monitoring structure with two possible signals $(\overline{t}^\prime_{s,N}, \overline{y}^\prime_{s,N})$ and $(\underline{t}^\prime_{s,N}, \underline{y}^\prime_{s,N})$, where $(\overline{t}^\prime_{s,N}, \overline{y}^\prime_{s,N})$ is observed with probability $\overline{\gamma}_{s, N} p^\prime(\overline{t}^\prime_{s,N}, \overline{y}^\prime_{s,N}|s, k)$ and $(\underline{t}^\prime_{s,N}, \underline{y}^\prime_{s,N})$ is observed with probability $\underline{\gamma}_{s, N} p^\prime(\underline{t}^\prime_{s,N}, \underline{y}^\prime_{s,N}|s, k)$ for $k=N, N-1$. We show that $\Pi$ is a garbling of this auxiliary monitoring structure when we restrict attention to $k \in \{N-1, N\}$. Consider any full-support prior distribution on $N$ and $N-1$, say $(0.5, 0.5)$, and compute the posterior belief on $N$ given each signal. The posterior beliefs given $(\overline{t}^\prime_{s,N}, \overline{y}^\prime_{s,N})$ and $(\underline{t}^\prime_{s,N}, \underline{y}^\prime_{s,N})$ are $\overline{L}^{\Pi^{\prime}}_{s, N}/ (\overline{L}^{\Pi^{\prime}}_{s,N}  + 1)$ and $\underline{L}^{\Pi^{\prime}}_{s, N}/(\underline{L}^{\Pi^{\prime}}_{s, N} + 1)$, respectively. Similarly, the posterior belief given any $(t,y)$ for $\Pi$ is $L^\Pi_{s, N} (t,y)/(L^\Pi_{s, N} (t,y) + 1)$. This posterior belief is in $[\underline{L}^{\Pi}_{s, N}/(\underline{L}^{\Pi}_{s, N} + 1), \overline{L}^{\Pi}_{s, N}/(\overline{L}^{\Pi}_{s, N} + 1)]$, 
	hence in $(\underline{L}^{\Pi^{\prime}}_{s, N}/(\underline{L}^{\Pi^{\prime}}_{s, N} + 1), \overline{L}^{\Pi^{\prime}}_{s, N}/(\overline{L}^{\Pi^{\prime}}_{s, N} + 1))$ by assumption. Given that the expected posterior belief is $0.5$ for both distributions, it follows that the two-point distribution of posterior beliefs generated by this auxiliary monitoring structure is second-order stochastically dominated by the distribution of posterior beliefs for $\Pi$; in particular, the former is a mean-preserving spread of the latter. By the standard result for Blackwell experiments (interpreting $N$ and $N-1$ as binary states and the monitoring structures as experiments), we can find $\phi_{s,N}(\cdot, \cdot|\overline{t}^\prime_{s,N}, \overline{y}^\prime_{s,N}),  \phi_{s,N}(\cdot, \cdot|\underline{t}^\prime_{s,N}, \underline{y}^\prime_{s,N}) \in \Delta(S \times Y)$ that satisfy, for $k=N$ and $N-1$,
	\begin{multline*}
		p(t,y|s,k) = \phi_{s,N}(t,y|\overline{t}^\prime_{s,N}, \overline{y}^\prime_{s,N}) \overline{\gamma}_{s, N} p'(\overline{t}^\prime_{s,N}, \overline{y}^\prime_{s,N}|s, k)
		\\+ \phi_{s,N}(t,y|\underline{t}^\prime_{s,N}, \underline{y}^\prime_{s,N}) \underline{\gamma}_{s, N} p'(\underline{t}^\prime_{s,N}, \underline{y}^\prime_{s,N}|s, k), \quad \forall (t,y) \in S \times Y.
	\end{multline*}
	Note that both $\phi_{s,N}(t,y|\overline{t}^\prime_{s,N}, \overline{y}^\prime_{s,N})$ and $\phi_{s,N}(t,y|\underline{t}^\prime_{s,N}, \underline{y}^\prime_{s,N})$ must be strictly positive for every $(t,y)$ in the support of $p(\cdot, \cdot|s,k)$ for $k= N, N-1$, because $p(t,y|s,N)/p(t,y|s,N-1)$ is strictly between $\underline{L}^{\Pi^{\prime}}_{s,N} = p'(\underline{t}^\prime_{s,N}, \underline{y}^\prime_{s,N}|s, N)/p'(\underline{t}^\prime_{s,N}, \underline{y}^\prime_{s,N}|s, N-1)$ and $\overline{L}^{\Pi^{\prime}}_{s,N} = p'(\overline{t}^\prime_{s,N}, \overline{y}^\prime_{s,N}|s, N)/p'(\overline{t}^\prime_{s,N}, \overline{y}^\prime_{s,N}|s, N-1)$ by assumption. Denote this common support at state $s$ by $D_{s,N} \subseteq S \times Y$.

	Similarly, for the case of the no-effort profile, we can find $\overline{\gamma}_{s, 0} > 0, \underline{\gamma}_{s, 0}>0$, and $ \phi_{s,0}(\cdot,\cdot |\overline{t}^\prime_{s, 1}, \overline{y}^\prime_{s, 1}), \phi_{s,0}(\cdot,\cdot|\underline{t}^\prime_{s, 1}, \underline{y}^\prime_{s, 1})$ whose support is the same $D_{s,0} \subseteq S \times Y$, and the above garbling equations are satisfied for $k=0,1$. Note that the same construction works for every $s$ in both cases.

	Now we can show that $\Pi$ is a strict $P$-weighted garbling of $\Pi^\prime$ with respect to $\{0, N\}$. To see this for $k=N$, set $\gamma_{s,N}^{\overline{t}^\prime_{s,N}, \overline{y}^\prime_{s,N}} = \overline{\gamma}_{s, N}$, $\gamma_{s,N}^{\underline{t}^\prime_{s,N}, \underline{y}^\prime_{s,N}} = \underline{\gamma}_{s, N}$, and set $\gamma_{s,N}^{t',y'} = 0$ for any $(t', y') \notin \{(\overline{t}^\prime_{s,N}, \overline{y}^\prime_{s,N}), (\underline{t}^\prime_{s,N}, \underline{y}^\prime_{s,N})\}$. Also, use the above $\phi_{s,N}(t,y|\overline{t}^\prime_{s, N}, \overline{y}^\prime_{s, N})$ and $\phi_{s,N}(t,y|\underline{t}^\prime_{s, N}, \underline{y}^\prime_{s, N})$ for garbling given $(\overline{t}^\prime_{s, N}, \overline{y}^\prime_{s, N})$ and $(\underline{t}^\prime_{s,N}, \underline{y}^\prime_{s,N})$ respectively and assign an arbitrary $\phi_{s,N}(\cdot, \cdot|t', y') \in \Delta(S \times Y)$ with support $D_{s,N}$ for each $(t',y') \notin \{(\overline{t}^\prime_{s,N}, \overline{y}^\prime_{s,N}), (\underline{t}^\prime_{s,N}, \underline{y}^\prime_{s,N})\}$. It is straightforward to show that they satisfy the requirement of strict $P$-weighted garbling. A similar construction works for the case with $k=0$. This proves one direction.

	Conversely, suppose that $\Pi$ is a strict $P$-weighted garbling of $\Pi^\prime$ with respect to $\{0, N\}$ and $\Pi^\prime$ is informative at $\{0, N\}.$ Consider the full-effort action profile. For any $s \in S$, there exist nonnegative weight $\gamma_{s,N}^{t', y'} \geq 0$ and  $\phi_{s,N}(\cdot,\cdot|t', y') \in \Delta (S \times Y)$ for every $(t', y')$ with some common support $D_{s,N} \subseteq S \times Y$ such that the following equations are satisfied:
	\begin{align*}
		p(t,y|s,k) = \sum_{(t',y') \in S \times Y^\prime} \gamma_{s,N}^{t',y'} \phi_{s,N}(t, y| t', y')p'(t',y'|s,k), \ \forall (t,y) \in S \times Y, \ \forall k \in \{N, N-1\}.
	\end{align*}
	It then follows that, for every $(t, y)\in D_{s,N}$,
	\[
	\frac{p(t,y|s,N)}{p(t,y|s,N-1)} = \frac{\sum_{(t',y') \in S \times Y^\prime} \gamma_{s,N}^{t',y'} \phi_{s,N}(t, y| t', y')p'(t',y'|s,N)}{\sum_{(t',y') \in S \times Y^\prime} \gamma_{s,N}^{t',y'} \phi_{s,N}(t, y| t', y')p'(t',y'|s,N-1)}.
	\]
	
	If $\underline{L}^{\Pi}_{s,N} = \overline{L}^{\Pi}_{s,N} = 1$, then the result directly follows from the informativeness of $\Pi^\prime$. So suppose not and $-\infty < \underline{L}^{\Pi}_{s,N} < \overline{L}^{\Pi}_{s,N} < \infty$ holds. Then, the above equations imply that there must exist some $(t_1',y_1'), (t_2',y_2') \in S \times Y'$ such that $\gamma_{s,N}^{t_1',y_1'}, \gamma_{s,N}^{t_2',y_2'} > 0$ and $p'(t_1',y_1'|s,N)/p'(t_1',y_1'|s,N-1) < p(t,y|s,N)/p(t,y|s,N-1) < p'(t_2',y_2'|s,N)/p'(t_2',y_2'|s,N-1)$ for every $(t,y)$ in the support $D_{s,N}$ of $p(\cdot, \cdot|s,k), k=N-1, N$. Since $\underline{L}^{\Pi^\prime}_{s,N} \leq p'(t_1',y_1'|s,N)/p'(t_1',y_1'|s,N-1)$ and $\overline{L}^{\Pi^\prime}_{s,N} \geq p'(t_2',y_2'|s,N)/p'(t_2',y_2'|s,N-1)$, we obtain $\left[\underline{L}^{\Pi}_{s,N}, \overline{L}^{\Pi}_{s,N}\right]\subset \left(\underline{L}^{\Pi^\prime}_{s,N}, \overline{L}^{\Pi^\prime}_{s,N}\right)$. 
	
	The proof for the no-effort case is almost identical and is therefore omitted. This proves the other direction.
\end{proof}

\subsection{Proof of \autoref{thm:invariant}}
\label{proof:them4}
\begin{proof}
	
	For each $\lambda \in \Lambda$, let $I^\lambda_{>} \subseteq I$ be the set of players with a positive weight (i.e., $\lambda_i > 0)$, $I^\lambda_{<} \subseteq I$ be the set of players with a negative weight, and $I^\lambda_{0} \subseteq I$ be the set of players with zero weight. Denote the set of players with nonzero weight by $I^\lambda_{\neq 0} = I^\lambda_{>} \cup I^\lambda_{<}$.

	An action profile $\alpha = (\alpha_s)_{s \in S}$, $\alpha_s \in \prod_{i \in I} \Delta (A_i)$ for each $s \in S$, is called \textit{admissible} if, for each $s \in S$ and $i \in I$, $u_i((\alpha^{\prime}_{i,s}, \alpha_{-i, s}), s) \leq u_i(\alpha_s, s)$ holds for any $\alpha^{\prime}_{i,s} \in \Delta (A_i)$ such that $p(t,y|s, \alpha_s) = p(t,y|s, (\alpha^{\prime}_{i,s}, \alpha_{-i, s}))$ for all $(t,y) \in S \times Y$.\footnote{This is the standard condition that requires any undetectable deviation to be unprofitable.}

	Given each $\lambda \in \Lambda$, the dual problem of the score maximization problem in \Cref{subsec:4.1} is as follows (see Section 3 of \cite{HTV_2014_GEB}). 
	\[
	\sup_{\alpha \ \text{admissible}} \inf_{\substack{\hat \alpha \in D_\lambda(\alpha) \\ \beta \in B(\hat \alpha)}} \sum_{s \in S, i \in I} \lambda_i \beta_s u_i\left(\left(\hat \alpha_{i,s},\alpha_{-i,s}\right) , s\right),
	\]
	where
	\begin{itemize}
		\item $D_\lambda(\alpha)$ is the set of $\hat \alpha = (\hat \alpha_s)_{s \in S} \in \left(\prod_{i \in I}\mathbb{R}^{\left|A_i\right|}\right)^{\left|S\right|}$ such that for each $s \in S$ and each $i \in I^\lambda_{\neq 0}$:\footnote{$D_{\lambda}(\alpha)$ is not empty, as $\alpha$ itself is a member of $D_\lambda(\alpha)$.}
		\begin{itemize} 
			\item $\sum_{a_i \in A_i} \hat \alpha_{i,s}(a_i) = 1$ and $\hat \alpha_{i,s}(a_i) \left\{  \begin{array}{ll}
				\leq 0 \ \text{if} \ i \in I^\lambda_> \ \text{and} \ \alpha_{i,s}(a_i) = 0,\\
				\geq 0 \ \text{if} \ i \in I^\lambda_< \ \text{and} \ \alpha_{i,s}(a_i) = 0,\\
				\text{unrestricted} \ \text{if} \ \alpha_{i,s}(a_i) > 0
			\end{array}
			\right.$

			\item $\hat p\left(t,y|s\right) = p\left(t,y|s, \left(\hat \alpha_{i,s}, \alpha_{-i,s}\right)\right), \forall (t,y)$ for every $i \in I^\lambda_{\neq 0}$ for some $\hat p\left(\cdot, \cdot|s\right) \in \Delta(S \times Y)$.
		\end{itemize}
		\item $B(\hat \alpha)$ is the set of invariant distributions of the Markov process over $S$ generated by $\hat p\left(t,y|s\right)$.
		
	\end{itemize}
	Note that $\hat \alpha_{i,s}$ is an ``extended'' mixed action, i.e., $\hat \alpha_{i,s}(a_i)$ can take a negative value. The definition of $u_i$ and $p$ is extended naturally to allow for such extended mixed actions.

	We prove that the value of this problem is greater for $(q',\Pi')$. We first show that the set of admissible $\alpha = (\alpha_s)_s$ is larger with $(q', \Pi')$. Take any $\alpha$ that is admissible for $(q,\Pi)$ and any $\alpha'_{i,s}$ that satisfies $p' (t,y'|s,\alpha_s) = p' (t,y' |s, (\alpha'_{i,s}, \alpha_{-i,s}))$ for all $(t,y') \in S \times Y'$. Then, 
	\begin{align*}
		p(t,y|s, \alpha_s) &= \sum_{(t',y') \in S \times Y'} \gamma_s^{t',y'}  \phi_s (t,y|t',y') p' (t',y' |s,\alpha_s) \\
		&= \sum_{(t',y') \in S \times Y'} \gamma_s^{t',y'}  \phi_s (t,y|t',y')  p' (t',y' |s, (\alpha'_{i,s}, \alpha_{-i,s}))\\
		&= p (t,y|s, (\alpha'_{i,s}, \alpha_{-i,s})).
	\end{align*}
	Since $\alpha$ is admissible for $(q,\Pi)$, $u_i ( (\alpha'_{i,s},\alpha_{-i,s}),s) \leq u_i (\alpha_s,s)$. Therefore, $\alpha$ is admissible for $(q',\Pi')$.

	Next, we show that the value of the minimization problem is smaller for $(q,\Pi)$ given any admissible $\alpha$. We first observe the following.

	\begin{lem}
		\label{lem:o2}
		Suppose that for any pure Markov strategy profile $q$ and $q'$ induce irreducible Markov chains with the same invariant distribution. Let $\tilde{\alpha}: S \to \mathbb{R}^{ |A|}$ be such that $\sum_{a \in A} \tilde{\alpha}_s (a) =1$ for each $s \in S$ (i.e., ``extended'' Markov strategy profile), and let $\tilde{\beta} = (\tilde{\beta}_s)_s \in \mathbb{R}^{|S|}$ be ``invariant'' under $\tilde{\alpha}$ with $q$ in the sense that $\sum_{ s \in S} \sum_{ a \in A} \tilde{\beta}_s   \tilde{\alpha}_s  (a) q (t|s,a)= \tilde{\beta}_t$ for every $t \in S$. Then, $\tilde{\beta}$ is invariant under $\tilde{\alpha}$ with $q'$.
	\end{lem}
	\begin{proof}
		See \Cref{proof:lemo2}.
	\end{proof}
	Thus, if $\hat{\beta}$ is invariant under $\tilde{\alpha}_s (a_i, a_{-i}) = \hat{\alpha}_{i,s} (a_i) \alpha_{-i,s} (a_{-i})$, $\forall s, a_i, a_{-i}$, with $q$, then it is so with $q'$. In particular, if $\hat{\alpha} \in D'_\lambda (\alpha) \cap D_\lambda (\alpha)$, then $B(\hat \alpha)=B'(\hat{\alpha})$, where $D_\lambda' (\alpha)$ and $B' (\hat{\alpha})$ are for $(q',\Pi')$.

	Also, note that the value of the objective function does not depend on the monitoring structure. Thus, it suffices to show that $D_\lambda^\prime(\alpha)$ for $(q',\Pi')$ is a subset of $D_\lambda(\alpha)$ for $(q,\Pi)$. 
	
	Take any $\hat{\alpha} \in D'_\lambda(\alpha)$. We can show in a similar way to the above proof for admissibility that, at every $s \in S$,  $p (\cdot, \cdot|s, (\hat{\alpha}_{i,s}, \alpha_{-i,s}))=p (\cdot, \cdot|s, (\hat{\alpha}_{j,s}, \alpha_{-j,s})) = \hat{p} (\cdot, \cdot|s) \in \Delta(S \times Y)$ holds for all $ i,j \in I^\lambda_{\neq 0}$, where $\hat{p}(t,y|s ) \equiv \sum_{(t',y') \in S \times Y'} \gamma_s^{t',y'}  \phi_s (t,y|t',y') \hat{p}^\prime(t',y'|s ) $  for each $(t, y) \in S \times Y$. This proves $\hat{\alpha} \in D_\lambda(\alpha)$.
\end{proof}

\subsection{Proof of \autoref{lem:o2}}
\label{proof:lemo2}
\begin{proof}

	For each pure Markov strategy profile $\mathbf{a} =(a_{s})_s \in A^{|S|}$, let $\pi ( \mathbf{a})  = (\pi_{s} (\mathbf{a}))_{s\in S}$ denote the unique invariant distribution induced by $ \mathbf{a}$ and $q$. The following observation is the key to the proof.

	\begin{lem}
		\label{lem:o3}
		Let $\tilde{\alpha}: S \to \mathbb{R}^{ |A|}$ be such that $\sum_{ a \in A} \tilde{\alpha}_s (a)=1$ for each $s \in S$ and let $\tilde{\beta}= (\tilde{\beta}_s)_s \in \mathbb{R}^{ |S|}$ be invariant under $\tilde{\alpha}$ with $q$. Then, there exists $k: A^{ |S|} \to \mathbb{R}$ such that
		\begin{equation}
			\label{eq:nnnnnn1}
			\tilde{\beta}_{s} \tilde{\alpha}_{s}  (a) = \sum_{\mathbf{a} \in A^{ |S|} : a_{s} = a} k (\mathbf{a}) \pi_{s} ( \mathbf{a}), \quad \forall s \in S, a \in A.
		\end{equation}
	\end{lem}

	We prove \autoref{lem:o3} after the proof of \autoref{lem:o2}.

	Given this lemma, the remainder of the proof is relatively straightforward. Observe that for any $t \in S$,
	\begin{multline*}
		\sum_{s \in S} \sum_{a \in A}  \tilde{\beta}_s  \tilde{\alpha}_{s} (a) q' ( t|s, a)= \sum_{s \in S} \sum_{\mathbf{a} \in A^{ |S|} }  k (\mathbf{a}) \pi_{s} ( \mathbf{a})q' (t|s, a_{s})\\
		=  \sum_{\mathbf{a} \in A^{ |S|}  }  k (\mathbf{a}) \sum_{s \in S} \pi_{s} ( \mathbf{a})q' (t|s, a_{s})=  \sum_{\mathbf{a} \in A^{ |S|}   }  k (\mathbf{a})  \pi_{t} ( \mathbf{a}) =\tilde{\beta}_t,
	\end{multline*}
	where the third equality follows from the assumption of the identical invariant distribution for $q$ and $q'$ and the fourth equality follows from \eqref{eq:nnnnnn1}.
	Therefore, $\tilde{\beta}$ is invariant under $\tilde{\alpha}$ with $q'$.
\end{proof}

\begin{proof}[Proof of \autoref{lem:o3}]
	In the following, we use the following fact: Let $\mathbf{A}$ be $(m \times n)$-matrix and $b \in \mathbb{R}^{m}$. Then $\mathbf{A}x = b$ for some $x \in \mathbb{R}^n$ if and only if for any row vector $z \in \mathbb{R}^m$ if $z \mathbf{A} =\mathbf{0} \in \mathbb{R}^n$, $zb=0$.

	Let $\mathbf{A}$ be a $ ( |S| \times |A|) \times (|A|^{ |S|})$ matrix whose $(s,a)$-row has $\pi_s ( \mathbf{a})$ in the column corresponding to $\mathbf{a} \in A^{|S|}$ if $a_{s} = a$, and $0$ otherwise. Also, let $x = (k (\mathbf{a}))_{ \mathbf{a} \in A^{|S|}}$, and $b = (  \tilde{\beta}_s    \tilde{\alpha}_{s} (a))_{s,a}$. Then $\mathbf{A}x =b$ represents \eqref{eq:nnnnnn1}. 
	
	Let $z = (z_s (a))_{ s \in S, a \in A} \in \mathbb{R}^{ |S| \times |A|}$. Then, $z \mathbf{A}=0$ corresponds to the condition that for each pure Markov strategy profile $\mathbf{a} = (a_{s})_s$, 
	\begin{equation}
		\label{eq:o2}
		\sum_{s \in S} z_{s} (a_{s}) \pi_{s} (\mathbf{a})= 0.
	\end{equation}
	
	For each $t \in S$ and $a\in A$, define 
	$$\tilde{z}_s^t (a):=\begin{cases}
		-\sum_{\tilde{s} \neq t} q (\tilde{s}|s, a), & \text{if $s=t$}\\
		q (t|s, a), & \text{otherwise}
	\end{cases}
	$$
	Thus, for pure Markov strategy profile $\mathbf{a} = (a_s)_s$,  $\tilde{z}_s^t (a_{s}) \pi_s (\mathbf{a})$ is the total ``outflow'' probability from state $t$ if $s=t$; otherwise, it is the ``inflow'' probability to state $t$ from $s \neq t$, given that $\mathbf{a}$ is played. So, 
	\begin{equation}
		\label{eq:o3}
		\sum_{ s \in S} \tilde{z}_s^t (a_{s}) \pi_s (\mathbf{a}) = 0 	
	\end{equation}
	by the definition of invariant distribution. Thus, for each $t$, $\tilde{z}^t = (\tilde{z}_s^t(a))_{s, a}$ is a solution of \eqref{eq:o2}. 
	
	Choose an arbitrary state $\hat{s} \in S$. We claim that for any solution $z = (z_s (a))_{s, a}$ that solves \eqref{eq:o2}, there exist $\lambda^t \in \mathbb{R}$ for each $t \in S \setminus \{ \hat{s} \}$ such that 
	$$z_s (a) =\sum_{t \neq \hat{s} } \lambda^t \tilde{z}_s^t (a), \quad \forall s \in S, a \in A.$$ 
	
	To see this, we first observe that for each pure Markov strategy profile $\mathbf{a} = (a_{s})_s$, the set $\{ (\tilde{z}_s^t (a_{s}))_s : t \neq \hat{s} \} $ is linearly independent. Recall that $ Q (\mathbf{a})^\top \pi (\mathbf{a})= \pi (\mathbf{a})$ or equivalently $ (Q (\mathbf{a})^\top -\mathbb{I}) \pi (\mathbf{a})=0$, where $Q (\mathbf{a})$ is the transition matrix (with $q(\cdot |s,a_s)$ as its $s$th row) given $\mathbf{a}$ and $\mathbb{I}$ is the identity matrix. Then by the Perron-Frobenius theorem for irreducible matrices, the null space of $(Q (\mathbf{a})^\top -\mathbb{I})$ is one-dimensional. Thus, by the rank-nullity theorem, $Q (\mathbf{a})^\top -\mathbb{I}$ has rank $|S|-1$. Note that $Q (\mathbf{a})^\top -\mathbb{I}$ contains $\tilde{z}^t (\mathbf{a}) \equiv (\tilde{z}^t_s (a_{s}))_s$ in the row corresponding to state $t$. In addition, for each $s$, $\sum_{ t\in S} \tilde{z}^{t}_s (a_{s}) = -\sum_{ \tilde{s} \neq s} q(\tilde{s} |s, a_{s}) + \sum_{t \neq s} q(t|s, a_{s}) =  0 $, so $\tilde{z}^{ \hat{s}} (\mathbf{a})   =- \sum_{ t \neq \hat{s}}\tilde{z}^t (\mathbf{a}) $. Thus, $\{ \tilde{z}^t (\mathbf{a}): t \neq \hat{s} \}$ should be linearly independent in order for $(Q(\mathbf{a})^\top -\mathbb{I})$ to have rank $|S|-1$.

	Consider any solution $z = (z_s (a))_{s, a}$ of $\eqref{eq:o2}$. Pick an arbitrary pure Markov strategy profile $\mathbf{a} = (a_{s})_s$. Note that the set of variables $(z_s (a_{s}))_s \in \mathbb{R}^{ |S|}$ that satisfy \eqref{eq:o2} for $\mathbf{a}$ (not all equations) has at most $|S|-1$ dimension. From \eqref{eq:o3} for $\mathbf{a}$, we know that for each $t \in S \setminus \{ \hat{s}\}$, $(\tilde{z}^t_s (a_{s}))_s$ is a solution for \eqref{eq:o2} for $\mathbf{a}$, and, from the previous paragraph, we know that they are linearly independent. Therefore, $(z_s (a_{s}))_s$ must be a linear combination of $(\tilde{z}^t_s (a_{s}))_s$, $t \in S \setminus \{ \hat{s}\}$, i.e., 
	$$z_s (a_{s}) = \sum_{ t \neq \hat{s}} \lambda^t \tilde{z}_s^t (a_{s}), \quad \forall s \in S$$
	for some $\lambda^t \in \mathbb{R}$ for each $t \in S \setminus \{ \hat{s} \}$. Now we show that this $\lambda^t$ satisfies $z_s (a) = \sum_{ t \neq \hat{s}} \lambda^t \tilde{z}_s^t (a)$ for any $a \in A$ and $s \in S$. Pick an arbitrary $s \in S$ and $a'_{s} \neq a_{s}$. Then, from equation \eqref{eq:o2} for pure Markov strategy profile $(a_{s}', a_{-s})$ (i.e., the pure Markov strategy profile that replaces $a_{s}$ with $a_{s}'$ only for state $s$ from $\mathbf{a}$), it follows that
	\begin{align*}
		&z_{s} (a_{s}')\pi_{s} ( (a_{s}', a_{-s})) + \sum_{s' \neq s} z_s (a_{s'})\pi_{s'} ( (a_{s}', a_{-s})) \\
		&\quad  =z_{s} (a_{s}')\pi_{s} ( (a_{s}', a_{-s}))+\sum_{s' \neq s} \sum_{t \neq \hat{s}} \lambda^t \tilde{z}_{s'}^t (a_{s'})\pi_{s'} ( (a_{s}', a_{-s}))\\
		&\quad = z_{s} (a_{s}')\pi_{s} ( (a_{s}', a_{-s}))+\sum_{t \neq \hat{s}} \lambda^t \sum_{s' \neq s} \tilde{z}_{s'}^t (a_{s'})\pi_{s'} ( (a_{s}', a_{-s}))\\
		&\quad =z_{s} (a_{s}')\pi_{s} ( (a_{s}', a_{-s}))+\sum_{t \neq \hat{s}} \lambda^t (- \tilde{z}_s^t (a_{s}') \pi_{s} ( (a_{s}', a_{-s}))) \\
		&\quad =0,
	\end{align*}
	where the third equality follows from \eqref{eq:o3} for pure Markov strategy profile $(a_{s}', a_{-s})$.
	By the irreducibility assumption, $\pi_s ( (a_{s}', a_{-s}))> 0$, so the above equation implies
	$$z_{s} (a_{s}') = \sum_{t \neq \hat{s}} \lambda^t  \tilde{z}_s^t (a_{s}').$$
	Since we picked an arbitrary $s$ and $a'_{s}$, we proved the claim.
	
	Next we claim that for any solution $z = (z_s (a))_{s,a}$ of \eqref{eq:o2}, $zb=0$. From our previous claim, let $z_s (a) = \sum_{t \neq \hat{s} } \lambda^t \tilde{z}_s^t (a)$ for each $s$ and $a$. Observe that
	$$zb=\sum_{t \neq \hat{s}} \lambda^t \sum_{s \in S}\sum_{a \in A} \tilde{z}_s^t (a) \tilde{\alpha}_{s} (a) \tilde{\beta}_s = 0,$$
	because for each $t \neq \hat{s}$,
	\begin{multline*}
		\sum_{a \in A} \tilde{z}_t^t (a) \tilde{\alpha}_{t} (a)\tilde{\beta}_t +\sum_{s \neq t} \sum_{a \in A} \tilde{z}_s^t (a) \tilde{\alpha}_{s} (a)\tilde{\beta}_s \\
		= \left(- \sum_{ \tilde{s} \neq t}\sum_{a \in A} q ( \tilde{s} |t,a)\tilde{\alpha}_{t} (a)  \right) \tilde{\beta}_t + \sum_{s \neq t}\left(   \sum_{a \in A} q (t|s, a) \tilde{\alpha}_{s} (a) \right)\tilde{\beta}_s =0,
	\end{multline*}
	where the last equality follows as $\tilde{\beta} = (\tilde{\beta}_s)_s$ is invariant under $\tilde{\alpha}$ with $q$.  
\end{proof}

\bibliography{StochasticGarblingMain}    
\bibliographystyle{te}

\end{document}